\documentclass[11pt,a4paper]{article}
\usepackage{amsmath,amssymb,amsfonts,latexsym,graphicx,amsthm}
\usepackage{thmtools}
\usepackage{color}
\usepackage{url,hyperref}
\usepackage{comment}
\usepackage[linesnumbered,boxed,ruled,noend]{algorithm2e}
\usepackage{xcolor} \usepackage{hyperref} \hypersetup{
    colorlinks = true, allcolors = black, citecolor = {blue!0!black}, linkcolor = {blue!0!black}
}
\usepackage[shortlabels]{enumitem}
\usepackage{cleveref}

\usepackage[margin=1in]{geometry}

\newtheorem{theorem}{Theorem}[section]
\newtheorem{lemma}[theorem]{Lemma}
\newtheorem{claim}[theorem]{Claim}

\theoremstyle{definition}
\newtheorem{definition}[theorem]{Definition}

\newtheorem{invariant}[theorem]{Invariant}
\newtheorem{question}[theorem]{Question}

        {\medskip}

\newcommand{\eps}{\epsilon}

\newcommand{\ceil}[1]{\left\lceil #1 \right\rceil}
\newcommand{\floor}[1]{\left\lfloor #1 \right\rfloor}

\newcommand{\brac}[1]{\left(#1\right)}
\newcommand{\indic}[1]{\mathbf{1}\left[#1\right]}

\newcommand{\cnt}{\mathsf{cnt}}

\newcounter{paragraphCounter} % Create a new counter

\usepackage{tikz}
\usetikzlibrary{trees}
\usetikzlibrary{shapes.geometric}

\begin{document}

\title{Faster Deterministic Streaming Vertex Coloring}

\author{
	Shiri Chechik\thanks{Tel Aviv University, \href{}{shiri.chechik@gmail.com}} \and 
	Hongyi Chen\thanks{State Key Laboratory for Novel Software Technology, Nanjing University, \href{}{hongyi.chen@smail.nju.edu.cn}}\and    
	Tianyi Zhang\thanks{State Key Laboratory for Novel Software Technology, Nanjing University, \href{}{tianyiz25@nju.edu.cn}}
}

\date{}

\maketitle

\begin{abstract}
Graph coloring is a fundamental problem in computer science. In the semi-streaming model, an input graph \(G\) on \(n\) vertices and maximum degree \(\Delta\) is presented as a stream of edges, and the goal is to compute a vertex coloring using a small number of colors while storing only \(\tilde{O}(n)\) bits of memory.

Recent work has revealed an exponential separation between randomized and deterministic approaches in this setting: while randomized algorithms can achieve a \((\Delta+1)\)-coloring in a single pass [Assadi, Chen, and Khanna, 2019], any single-pass deterministic algorithm requires \(\exp(\Delta^{\Omega(1)})\) colors [Assadi, Chen, and Sun, 2022]. Consequently, deterministic algorithms that use few colors must necessarily make multiple passes over the stream. Prior to this work, the best known deterministic trade-offs were: an \(O(\Delta^2)\)-coloring in 2 passes, an \(O(\Delta)\)-coloring in \(O(\log \Delta)\) passes [Assadi, Chen, and Sun, 2022], and a \((\Delta+1)\)-coloring in \(O(\log \Delta \cdot \log\log \Delta)\) passes [Assadi, Chakrabarti, Ghosh, and Stoeckl, 2023]. It remained open whether better trade-offs—particularly with sub-logarithmic pass complexity and linear-in-\(\Delta\) palette size—were achievable.

In this paper, we present a new deterministic semi-streaming algorithm that computes an \(O(\Delta)\)-coloring in \(O\!\left(\sqrt{\log \Delta}\right)\) passes. This is the first deterministic streaming algorithm to achieve a coloring with palette size linear-in-\(\Delta\) using sublogarithmic-in-\(\Delta\) passes.
\end{abstract}

\thispagestyle{empty}
\clearpage
\setcounter{page}{1}

\section{Introduction}

Graph coloring is a fundamental problem in graph theory and computer science. Given an $n$-vertex undirected graph $G = (V, E)$ with a maximum degree of $\Delta$, the goal is to assign each vertex a color such that no edge connects two vertices of the same color. Beyond its combinatorial interest, graph coloring has many applications, including scheduling, register allocation, and resource assignment.

In this work, we focus on vertex coloring in the \emph{semi-streaming} model, which is designed for processing massive graphs. In this model, the edges of the input graph arrive sequentially as a stream, the algorithm can make only a limited number of passes over the stream, and the allowed working memory is constrained to $\tilde{O}(n)$ bits\footnote{$\tilde{O}$ hides poly-logarithmic factors.}. The semi-streaming model has inspired numerous algorithmic developments and lower bound results for fundamental graph problems. Our goal is to design streaming algorithms that, within these constraints, can output a proper vertex coloring while keeping the palette size (the total number of colors) and the number of passes to a minimum.

Similar to the classical sequential model, obtaining an optimal or near-optimal coloring in the streaming model is infeasible. Consequently, previous work on graph streaming algorithms has focused on obtaining colorings whose size is bounded by structural parameters of the input graph, such as the maximum degree $\Delta$, degeneracy, or arboricity. This question was first studied in \cite{10.5555/3310435.3310483}, where the authors presented a breakthrough randomized semi-streaming algorithm that computes a $(\Delta + 1)$-coloring in a single pass over the graph stream. Additionally, a contemporary result in \cite{bera2018coloringgraphstreams} achieved a randomized one-pass $O(\Delta)$-coloring, followed by subsequent work in \cite{bera_et_al:LIPIcs.ICALP.2020.11} that provided a $(1 + o(1))\kappa$-coloring, where $\kappa$ represents the degeneracy of graph $G$. Later, the study of randomized streaming vertex coloring culminated in \cite{10.1145/3519935.3520005} with a one-pass $\Delta$-coloring algorithm.

All the algorithms mentioned above are randomized. In sharp contrast, it was proven in \cite{assadi2022deterministic} that any single-pass deterministic algorithm requires $\exp(\Delta^{\Omega(1)})$ colors, indicating that multiple passes are essential for deterministic streaming algorithms that utilize a small number of colors, such as $\mathrm{poly}(\Delta)$ colors. As a complementary result, the authors of \cite{assadi2022deterministic} further demonstrated that allowing multiple passes enables deterministic algorithms to achieve significantly better performance, including $O(\Delta^2)$-colorings in 2 passes or $O(\Delta)$-colorings in $O(\log \Delta)$ passes. In a follow-up work \cite{assadi2023coloring}, the authors improved the color count from $O(\Delta)$ to $\Delta + 1$ by slightly increasing the number of passes to $O(\log \Delta \log \log \Delta)$.

Notably, there are currently no established lower bounds against deterministic multi-pass algorithms, leaving it unclear what the best vertex colorings computable in 2 passes are. Conversely, the question of the best pass complexity for \(O(\Delta) \)-colorings (or even \( (\Delta + 1) \)-colorings) remains open. Below, we would like to highlight the following question:

\begin{question}\label{quest}

What is the smallest number of passes required for computing \( O(\Delta) \)-colorings deterministically in the semi-streaming setting?

\end{question}

\subsection{Our Result}

Our main result is a deterministic semi-streaming algorithm that achieves an $O(\Delta)$-coloring of a graph in $O\brac{\sqrt{\log \Delta}}$ passes, which circumvents the logarithmic pass complexities in \cite{assadi2022deterministic,assadi2023coloring} and makes progress towards a better understanding of \Cref{quest}.

\begin{theorem}\label{theorem:main}
Let $G = (V, E)$ be an input graph on $n$ vertices with maximum degree $\Delta$ accessible via a data stream. For any constant $\eta\in (0, 1)$, there is a deterministic streaming algorithm that finds an $(1+\eta)\Delta$-coloring of $G$ using $O\brac{\frac{1}{\eta}\sqrt{\log \Delta}}$ passes and $\tilde{O}(n)$ bits of space.
\end{theorem}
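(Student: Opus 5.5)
Our plan follows the standard deterministic partial-coloring framework of \cite{assadi2022deterministic}, with a key change: each pass colors more vertices. Fix the palette $[(1+\eta)\Delta]$. Maintain a partial proper coloring $\chi$ and let $U$ be the set of uncolored vertices. For $v\in U$, let $L(v)$ be the colors not used by colored neighbors; then $|L(v)|\ge \eta\Delta + \deg_U(v)$.

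The algorithm runs in phases. At phase start, each $v\in U$ needs a succinct description of a large subset of $L(v)$. A full list costs $\Theta(\Delta)$ bits, which is too many. Instead, partition the palette into $B$ blocks and record, per vertex, only the number of used colors in each block. This takes $\tilde O(B)$ bits per vertex and one pass to compute, so the budget forces $B=\tilde O(1)$ or forces us to restrict to a subset of vertices.

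The core step colors a constant fraction of $U$ (or reduces $\max\deg_U$ by a large factor) in $O(1)$ passes. The idea is to derandomize a "random color trial" through a multi-level hierarchical palette refinement. Organize the palette as a tree of depth $k$ and branching factor $b=\Delta^{1/k}$. Choose $k\approx\sqrt{\log\Delta}$, so that $b=2^{\sqrt{\log\Delta}}$. Each vertex descends the tree one level per pass, picking a child block. Proceed as follows:
\begin{itemize}
\item Choose the child deterministically so that the potential $\sum_v \Phi_v$ stays controlled. Here $\Phi_v$ is the ratio of $|\text{block}\cap L(v)|$ to the number of uncolored neighbors that chose the same block.
\item Make these choices via the method of conditional expectations, using pairwise-independent hash families. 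A pass enumerates the $\mathrm{poly}(b)$ seeds and counts conflicts; this needs $\tilde O(n\cdot \mathrm{poly}(b))$ space, which we must avoid, so restrict to seed families of size $\tilde O(1)$ per level.
\end{itemize}
After $k$ levels each vertex holds one color. A vertex keeps it if no neighbor holds it.

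We then argue that a constant fraction of vertices succeed, because the expected number of same-block neighbors per level is $\deg_U(v)\cdot|\text{block}\cap L|/|L|$. The slack $\eta\Delta$ bounds conflicts by a $1/(1+\eta)$ factor, so each phase colors an $\Omega(\eta)$ fraction. Across the whole run this gives $O(\frac1\eta\log\Delta)$ phases of $k$ passes each, which is too many.

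The intended saving is to amortize across levels. Each $k$-pass descent should reduce the uncolored degree by a factor $b$ rather than a constant. The mechanism is that conflicting vertices, upon reaching the leaves, retry inside the same sub-block in a pipelined fashion. The target is $k$ passes per factor-$b$ reduction plus $\sqrt{\log \Delta}/\eta$ rounds, with $1/\eta$ coming from slack loss per round. This balances to $O(\frac1\eta\sqrt{\log\Delta})$ total passes.

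The main obstacle is this degree-reduction-by-$b$ step, which must be both deterministic and space-efficient. Once fewer than $\Delta/b$ neighbors compete for a block, the per-vertex state must stay $\tilde O(1)$ words. I would first try storing only the current block index together with counts of competing neighbors, recomputed each pass. I would then verify that the conditional-expectations choice over a small seed family keeps the total conflict potential within a $(1+\eta/k)$ factor per level, so that the losses compound only to $(1+\eta)$ overall.
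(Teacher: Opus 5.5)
There is a genuine gap. The step you flag as the main obstacle is where all the difficulty lies, and your accounting would not give $O(\sqrt{\log\Delta})$ even if that step worked. A fixed gain of $b=2^{\sqrt{\log\Delta}}$ per descent of $k=\sqrt{\log\Delta}$ passes is a factor $b^{1/k}=2$ per pass, no better than the $O(\log\Delta)$-pass baseline.

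The missing idea is \emph{acceleration: trading the shrinking uncolored set for per-vertex memory}. The paper first runs the constant-factor extension of Lemma~\ref{lemma:prep} for $O(\frac1\eta\sqrt{\log\Delta})$ rounds; this is the only source of the $1/\eta$. Afterwards $|U|\le n/s^2$ with $s=\sqrt{n/|U|}\ge 2^{200\sqrt{\log\Delta}}$. Each uncolored vertex can then keep about $s^2$ counters. That pays for a brute-force search over \emph{all} pairs $(h_1,h_2)$ from almost $10$-wise independent families of size $O(s\log^3 n)$ with range $M\approx s^{1/200}$ (Lemma~\ref{lem:hash}). This replaces both your unaffordable $\tilde O(n\cdot\mathrm{poly}(b))$ enumeration and your unjustified $\tilde O(1)$-size seed families. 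The latter cannot even have near-uniform marginals on $b$ values once $b$ exceeds the family size. One descent of $O(\log\Delta/\log s)$ passes then shrinks $|U|$ by a factor $s^{1/10}$ (Lemma~\ref{lemma:extend}), so $s_{t+1}\ge s_t^{21/20}$. The descents get geometrically shorter, and $O(\log\log\Delta)$ of them cost $O(\sqrt{\log\Delta})$ passes in total.

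Two further ingredients are needed for a descent to lose only an $s^{-\Omega(1)}$ fraction of vertices rather than the constant fraction you are stuck with.

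First, do not descend to single colors and run a color trial; that is exactly what caps you at an $\Omega(\eta)$ gain per phase. Your $\Phi_v$ is close in spirit to the paper's joint vertex/color partition and Invariant~\ref{inv:deg}. The paper, however, stops once palettes have size at most $s^2$ and then $(\deg+1)$-list-colors every bin in memory. The only vertices lost are those that become bad, fewer than an $s^{-1/9}$ fraction per round.

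Second, you treat per-block counts of used colors, i.e.\ $|\text{block}\cap L(v)|$, as computable in one pass. Counting \emph{distinct} neighbor colors is a missing-item/distinct-elements problem with no small-space deterministic solution. The paper instead uses the computable linear estimator $|C_i\setminus F_v|-|\{u\in N(v):\phi(u)\in C_i\setminus F_v\}|$. A single heavily used color can wipe out a whole sub-block, so this estimator concentrates only after frequent neighbor colors are pruned with Misra--Gries (Lemma~\ref{lem:freq}). Pruning caps every multiplicity at $|C_i\setminus F_v|/s+2$ (Lemma~\ref{lem:remove-freq}), and Lemma~\ref{lem:conc} is then applied separately to each multiplicity class.

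Without these pieces, your refinement step does not preserve $\Phi_v$ with the small per-level loss you require.
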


Compared to $O(\Delta)$-coloring studied in \cite{assadi2022deterministic}, we have optimized the constant coefficient in front of $\Delta$ in the number of colors from $O(1)$ to $1+\eta$ for any constant parameter $\eta\in (0,1)$, but this is not a major improvement and it comes with a minor adjustment of a subroutine in \cite{assadi2022deterministic}.

\subsection{Related Work}

Ben-Eliezer et al.~\cite{ben2022framework} introduced the adversarially robust streaming model which lies conceptually between deterministic and randomized streaming algorithms.
In this model, the adversary can adaptively generate the stream: at any time, it may request the algorithm to produce an output and then choose the next input based on the entire history (the transcript of past inputs and outputs). The algorithm must produce correct outputs with high probability, at all times and against any adaptive adversary.

Graph coloring in the adversarially robust streaming model was first studied by Chakrabarti et al.~\cite{chakrabarti_et_al:LIPIcs.ITCS.2022.37}, who established a one-pass lower bound: any valid $K$-coloring algorithm must use at least $\Omega(n\Delta^2 / K)$ bits of space. In particular, a semi-streaming algorithm would require $\Omega(\Delta^2)$ colors. On the algorithmic side, the same work presented an $O(\Delta^2)$-coloring algorithm using $O(n\sqrt{\Delta})$ space and an $O(\Delta^3)$-coloring algorithm using $O(n)$ space, both assuming oracle access to $O(n\Delta)$ random bits. Subsequently, Assadi et al.~\cite{assadi2023coloring} improved these results, showing that an $O(\Delta^{5/2})$-coloring can be achieved with oracle access to $O(n\Delta)$ random bits, and that an adversarially robust $O(\Delta^3)$-coloring is possible in semi-streaming space even when the randomness used by the algorithm is counted toward its space usage.
Despite these advances, a polynomial gap remains between the best known lower and upper bounds in the adversarially robust setting.

\subsection{Technical Overview}

\paragraph*{Previous Approach.} 
% We begin by reviewing the method of~\cite{assadi2023coloring}.  
% The algorithm proceeds in a sequence of epochs. Each epoch starts with a partial coloring~$\chi$ in which a subset~$U \subseteq V$ remains uncolored. The goal of the epoch is to extend~$\chi$ by coloring at least one third of the vertices in~$U$, thereby reducing the size of~$U$ to at most~$\frac{2}{3}|U|$. Initially,~$U = V$. After at most~$\lceil \log_{3/2} \Delta \rceil$ epochs, the uncolored set satisfies~$|U| \le n/\Delta$. At this point, the algorithm performs a final pass to gather all edges incident to~$U$ and completes the coloring greedily.  
% The $i$-th epoch requires~$O(\log \Delta / i)$ passes, so the total number of passes is bounded by~$O(\log \Delta \log \log \Delta)$.

We begin by reviewing the approach of~\cite{assadi2022deterministic}.  
Their algorithm maintains a partial coloring $\phi: V\rightarrow [O(\Delta)]\cup \{\bot\}$ ($\bot$ means uncolored) of the graph $G$ and gradually extends it to reduce the number of uncolored vertices over multiple passes.
Initially, it starts with an empty coloring where all vertices are uncolored. Then, they show that in every $O(1)$ passes over the input stream, one could reduce the number of uncolored vertices by a constant factor. So, in the end, the total number of passes would be $O(\log\Delta)$. Morally speaking, imagine that we could use randomness and in each round we draw a uniformly random color $c(v)$ from $O(\Delta)$ to each uncolored vertex $v\in \phi^{-1}(\bot)$. Then, we can argue that for each vertex $v\in \phi^{-1}(\bot)$, the probability that $c(v)$ conflicts with some neighboring $c(u)$ or $\phi(w)$ is bounded away from $1$, or more precisely:
\begin{equation}\label{equ:conflict}
\Pr[\exists u\in N(v)\cap \phi^{-1}(\bot), c(v) = c(u)] + \Pr[\exists w\in N(v)\setminus \phi^{-1}(\bot), c(v) = \phi(w)] < 1 - \Omega(1)
\end{equation}
Then, we can extend the partial coloring $\phi$ by assigning $c(v)$ to each uncolored vertex $v$ without any conflicts with its neighbors, which would reduce the number of uncolored vertices by a constant factor. Derandomizing this approach can be done using universal hash functions instead of independent random colors; this is because we only need to upper bound the total number of vertex conflicts, and so we can choose the hash function with the minimum number of conflicts to extend $\phi$.

\paragraph*{Some Natural Attempts.} To reduce the number of passes, a natural observation is that the $v$'s conflict probability (the first term in \Cref{equ:conflict}): $$\Pr[\exists u\in N(v)\cap \phi^{-1}(\bot), c(v) = c(u)]$$ becomes smaller when the number of uncolored vertices shrinks. To see this, imagine that at the moment $|\phi^{-1}(\bot)| = \eps n$ for some $\eps\in (0, 1)$. Then hopefully each vertex $v\in \phi^{-1}(\bot)$ has only $\epsilon \Delta$ neighbors in vertex set $\phi^{-1}(\bot)$ (proportional to the size of $\phi^{-1}(\bot)$ compared to $V$). If we could draw a uniformly random color for each $v\in \phi^{-1}(\bot)$ from all $v$'s available colors $[O(\Delta)]\setminus \{\phi(u)\mid u\in N(v)\}$, then we could completely avoid the second conflict probability term in \Cref{equ:conflict}:
$$\Pr[\exists w\in N(v)\setminus \phi^{-1}(\bot), c(v) = \phi(w)]$$
and then the conflict probability of $v$ would be at most:
$$\Pr[\exists u\in N(v)\cap \phi^{-1}(\bot), c(v) = c(u)]\leq \epsilon \Delta \cdot O\brac{\frac{1}{\Delta}} = O(\epsilon)$$
So, under a random color extension (which always draws available colors) to $\phi$, in expectation a proportion of $1 - \Omega(\epsilon)$ would be colored, leaving $O(\epsilon^2n)$ uncolored vertices; in other words, the ratio $|\phi^{-1}(\bot)| / n$ would drop quadratically. Therefore, the total number of passes would be at most $O(\log\log\Delta)$.

The main difficulty of implementing this natural attempt is that we do not have direct access to the available colors of each vertex $v\in \phi^{-1}(\bot)$ which is $[O(\Delta)]\setminus \{\phi(u)\mid u\in N(v)\}$; that is, those colors not yet used by any of $v$'s neighbor under color assignment $\phi$. Instead, what we do have access to is only a data stream of unavailable colors $\{\phi(u)\mid u\in N(v)\}$ for each $v\in \phi^{-1}(\bot)$ converted from the input graph stream. This task of accessing available colors given a data stream of unavailable colors can be viewed as a \emph{missing item} problem. In the standard formulation of the missing item problem, the algorithm is given a stream of integers $x_1, x_2, \ldots x_N\in [M]$, and the goal is usually to find some or (approximately) count all elements in $[M]\setminus \{x_1, x_2, \ldots, x_N\}$. This problem was studied in \cite{DBLP:conf/soda/Stoeckl23} and the author showed a deterministic space lower bound of $\tilde{\Omega}(N)$ against any streaming algorithm that outputs any single missing item. In our case, $N$ is as large as $\Delta$, and so we would need to allocate $\tilde{\Omega}(\Delta)$ space for each $v\in \phi^{-1}(\bot)$ which is infeasible.

\paragraph*{Vertex and Color Partitions.}
As hinted by the impossibility result from \cite{DBLP:conf/soda/Stoeckl23}, it is generally hard to assume access to the set of available colors $[O(\Delta)]\setminus \{\phi(u)\mid u\in N(v)\}$ and bypass the conflicts with neighbors that are already colored, which is the main source of technical difficulty, without using space of palette size $O(\Delta)$, per uncolored vertex $v$.

To circumvent the $O(\Delta)$ space requirement per uncolored vertex while still avoiding the conflicts with colored neighbors, the basic idea is to reduce the palette size. More concretely, we are going to partition the palette into much smaller sub-palettes:
$$[O(\Delta)] = C_1\cup C_2\cdots \cup C_l$$
and also partition the uncolored vertex set into vertex subsets:
$$\phi^{-1}(\bot) = B_1\cup B_2\cdots \cup B_l$$
such that it is possible to assign colors from $C_i$ to all vertices in $B_i$ in a valid manner, $\forall 1\leq i\leq l$. If all the sub-palettes $C_i$ are small, and the union of the induced subgraphs $G[B_i]$ fits in $\tilde{O}(n)$ bits of memory, then we can compute and store all the available colors and the subgraphs in memory over a single pass, and then complete the color extension.

This vertex and color partitioning approach has been previously adopted in some recent distributed graph coloring literature \cite{DBLP:conf/podc/ChangFGUZ19,DBLP:conf/icalp/Parter18,DBLP:conf/wdag/ParterS18,czumaj2020simple,DBLP:conf/icalp/CoyCDM23}, in which we found \cite{czumaj2020simple} particularly helpful for our reference. The authors of \cite{czumaj2020simple} worked on the problem of computing a $(\Delta+1)$-coloring in the congested clique model. Their main algorithm can deterministically compute a pair of vertex and color partitions but with an extra bad vertex subset:
$$\begin{aligned}
	V &= B_1\cup B_2\cdots \cup B_{l = \Delta^{0.9}}\cup B_{\text{bad}}\\
	[\Delta+1] &= C_1\cup C_2\cdots \cup C_{l = \Delta^{0.9}}
\end{aligned}$$
such that for every vertex $v\in B_i, \forall 1\leq i\leq l$, the size of $C_i$ is strictly larger than the number of its neighbors in $B_i$, or more formally $\left| C_i \right| > \deg_{G[B_i]}(v)$ (note that they directly compute a full coloring in one shot after partitioning, and they do not repeatedly use color extensions like us, so they do not need to exclude any unavailable colors from sub-palette $C_i$). Here the set $B_{\text{bad}}$ will be a small vertex set of size $O(n / \Delta)$ collecting all the vertices that violate this inequality. In the end, they show that each induced subgraph $G[B_{\text{bad}}], G[B_i], \forall 1\leq i\leq l$ has size $O(n)$ which fits in the local memory of a single machine, so they could compute the whole coloring using $O(n)$ machines in $O(1)$ communication rounds.

\paragraph*{Deterministic Partitions via Almost $k$-Wise Independence.} To deterministically compute a good pair of vertex and color partitions, the authors of \cite{czumaj2020simple} used $O(1)$-wise independent hash families to map vertices and colors to their subsets and sub-palettes. More specifically, they build two $O(1)$-wise independent hash families $\mathcal{H}_1, \mathcal{H}_2$ where $\mathcal{H}_1$ contains functions $h: V\rightarrow [\Delta^{0.9}]$ and $\mathcal{H}_2$ contains functions $h: [\Delta+1]\rightarrow [\Delta^{0.9}]$; in reality, the exponent of range is a much smaller constant (say, $0.1$ instead of $0.9$), and the actual algorithm needs to repeat the partitions for $O(1)$ times \cite{czumaj2020simple}. Then, they used the method of conditional expectation to find a good pair of hash functions $(h_1^*, h_2^*)\in \mathcal{H}_1\times \mathcal{H}_2$ which is used as their vertex and color partitions.

It turns out that in our streaming setting, as opposed to the congested clique setting, it is rather prohibitive to find a good hash function out of an $O(1)$-wise independent hash family. According to standard literature \cite{DBLP:journals/siamcomp/PaghP08,DBLP:journals/jal/AlonBI86,DBLP:conf/focs/ChorGHFRS85}, $\mathcal{H}_1, \mathcal{H}_2$ have sizes polynomial in terms of their domains which are $n^{O(1)}, \Delta^{O(1)}$, respectively. To decide if a hash function pair $(h_1, h_2)\in \mathcal{H}_1\times \mathcal{H}_2$ induces a good partition for any specific vertex $v\in \phi^{-1}(\bot)$, we need to count its degree in the same bin which is $\left| \{h_1(v) = h_1(u)\mid u\in N(v) \}\right|$, as well as estimate its available colors which is $\{c\mid h_2(c) = h_1(v)\} \setminus \{\phi(u)\mid u\in N(v)\}$. This requires maintaining a counter for each pair $(h_1, h_2)$ and each vertex $v\in \phi^{-1}(\bot)$ during one stream pass, and this totals a space usage of $O\brac{|\phi^{-1}(\bot)|\cdot \brac{n^{O(1)} + \Delta^{O(1)}}} = n^{1+O(1)}$.

In order to reduce the space, instead of relying on $O(1)$-wise independence, we will switch to almost $O(1)$-wise independence \cite{alon1992simple,DBLP:journals/siamcomp/NaorN93} where the hash family only has size roughly $M^{O(1)}\log^{O(1)}N$ with $N, M$ being the domain and range sizes. Plugging in our setting, $N = n$ so the memory requirement would be $\tilde{O}\brac{|\phi^{-1}(\bot)|\cdot M^{O(1)}}$. Still, we cannot use the same domain size of $M = \Delta^{\Omega(1)}$ as in \cite{czumaj2020simple}. This leads to the following modifications in algorithm parameters.
\begin{itemize}
	\item To make $|\phi^{-1}(\bot)|$ moderately small, we can run the color extension method of~\cite{assadi2022deterministic} for $\sqrt{\log \Delta}$ rounds as a preprocessing step. This takes $O(\sqrt{\log\Delta})$ passes over the data stream and brings down the number of uncolored vertices to $|\phi^{-1}(\bot)|\leq n / 2^{\Omega(\sqrt{\log\Delta})}$.
	
	\item Set the range size of hash functions $M = 2^{\Theta\brac{{\sqrt{\log\Delta}}}}$ which keeps the total memory requirement below $\tilde{O}(n)$. Since in the end we want a color partition of small size so that we can deterministically compute all the available colors around each uncolored vertex, we will recursively partition each sub-palette for $\log_M\Delta = O(\sqrt{\log\Delta})$ rounds.
\end{itemize}

\paragraph*{Pruning Frequent Neighbor Colors.} Let us focus on a single round where we are given a pair of partitions:
$$\begin{aligned}
	\phi^{-1}(\bot) &= B_1\cup B_2\cdots \cup B_{l}\cup B_{\text{bad}}\\
	[O(\Delta)] &= C_1\cup C_2\cdots \cup C_{l}
\end{aligned}$$
and our goal is to refine this partition further so that the sizes of sub-palettes $C_i$ and subgraphs $G[B_i]$ become smaller by a factor of $2^{-\Omega(\sqrt{\log\Delta})}$. As discussed above, we will use almost $O(1)$-wise independent hash families $\mathcal{H}_1, \mathcal{H}_2$ that map uncolored vertices and colors to the range $[M]$ for some $M = 2^{\Omega(\sqrt{\log\Delta})}$. Each vertex $v\in B_i$ will compute some statistics during a single pass over the stream for each pair of hash functions $(h_1, h_2)\in \mathcal{H}_1\times \mathcal{H}_2$:
\begin{enumerate}[(i)]
	\item the number of neighbors under $h_1$, namely the size of the set $\{u\mid h_1(u) = h_1(v), u\in B_i\}$;
	\item the (approximate) number of available colors in the color set: $$\{c\mid h_2(c) = h_1(v), c\in C_i\}\setminus \{\phi(u)\mid u\in N(v)\}$$
\end{enumerate}
If the value of (ii) is larger than (i), then $v$ would give one vote to the pair $(h_1, h_2)$. In the end, we hope that there exists a good pair $(h_1^*, h_2^*)$ which receives votes from a proportion of at least $1 - 2^{-\Omega(\sqrt{\log\Delta})}$ uncolored vertices. If so, we would take this pair $(h_1^*, h_2^*)$ to generate a partition refinement, and all vertices which did not vote to $(h_1^*, h_2^*)$ would move to $B_{\text{bad}}$. After $O(\sqrt{\log\Delta})$ rounds of partition refinement, we will be able to color all vertices in $B_i$'s using colors from $C_i$'s in one pass, as $C_i$'s will be small enough, leaving a very small uncolored vertex set $B_\text{bad}$.

Assume inductively that $v$ has more available colors in $C_i$ than its neighbors in $B_i$. To preserve this property for a refined partition, we argue that both (i)(ii) will concentrate around their expectations using concentration inequalities for almost $k$-wise independence; such concentration bounds were previously studied in \cite{gravin2021concentration} but in the context of cryptography with a somewhat different definition of almost $k$-wise independence from ours, 
so we will verify our version of concentration (see \Cref{lem:conc}) in the appendix.

By the concentration bound, we know that for most vertices $v\in B_i$, it should vote for most pairs $(h_1, h_2)$. The main difficulty arises from verifying which pairs $(h_1, h_2)$ are good. Computing (i) is straightforward, whereas computing (ii) is much more challenging because it is still a missing item problem which is hard according to \cite{DBLP:conf/soda/Stoeckl23}. The difference is that here we only need a lower bound estimation of the number of the available colors, not output any specific available color. A basic idea is to lower bound (ii) by:
\begin{align}
	&\left|\{c\mid h_2(c) = h_1(v), c\in C_i\}\setminus \{\phi(u)\mid u\in N(v)\}\right|\\
	= &\left|\{c\mid h_2(c) = h_1(v), c\in C_i\}\right| -  \left|\{\phi(u)\mid u\in N(v), h_2(\phi(u)) = h_1(v), \phi(u)\in C_i\}\right| \label{ineq:distinct}\\
	\geq & \left|\{c\mid h_2(c) = h_1(v), c\in C_i\}\right| -  \left|\{u\in N(v)\mid h_2(\phi(u)) = h_1(v), \phi(u)\in C_i\}\right| \label{ineq:linear-est}
\end{align}
The second term (which will be our main focus) $\left|\{\phi(u)\mid u\in N(v), h_2(\phi(u)) = h_1(v), \phi(u)\in C_i\}\right|$ counts the number of distinct colors used by $v$'s neighbors, while the set $\{u\in N(v)\mid h_2(\phi(u)) = h_1(v), \phi(u)\in C_i\}$ is what the algorithm sees in the data stream. Ideally, if we could directly compute the number of distinct neighbor colors, then we could use \Cref{ineq:distinct} as our lower bound which aligns well with concentration inequalities. However, this is impossible for deterministic algorithms without using $\tilde{\Omega}(\Delta)$ space \cite{DBLP:conf/focs/IndykW03}, and we could only use the linear estimator \Cref{ineq:linear-est} as our lower bound; this linear estimator was also used in \cite{assadi2023coloring} for deterministic streaming $(\Delta+1)$-coloring.

Unfortunately, concentration bounds of the linear estimation $$\left|\{u\in N(v)\mid h_2(\phi(u)) = h_1(v), \phi(u)\in C_i\}\right|$$ of distinct neighbor colors could be very poor. Imagine that all colored neighbors of $v$ are only using $M/2$ different colors of $C_i$, with all these colors assigned evenly. Then, every such color itself would destroy a single vertex bin, and so the right-hand side of \Cref{ineq:linear-est} would be negative for nearly half of $h_1$. To fix this issue, we will first prune all those frequent neighbor colors away in a preprocessing step using a folklore deterministic algorithm for finding frequent items from \cite{10.5555/867576}. Then, we show that the concentration bounds could work after such pruning steps.

\section{Preliminaries}
\iffalse
\begin{lemma}[Theorem B.1 of \cite{gravin2021concentration}]
Suppose $X_1, X_2, \cdots, X_n$ are random variables taking values in $\{0, 1\}$, and $a_1, a_2, \cdots a_n$ are approximate expectations of these variables. If for any subset $I \subseteq [n]$ with $|I| \le k$, we have 
$
\mathbf E \left[ \prod_{i \in I} \left( X_i - a_i \right) \right] \le \varepsilon^{|I|}
$. 
Then, 
$$
\Pr\left[\sum_{i=1}^n X_i \ge \sum_{i = 1}^n a_i + \gamma n \right] \le \sqrt2  \left( \frac{\varepsilon + \sqrt{\frac{k}{en}}}{\gamma}\right)^k
$$

\end{lemma}
\fi

\subsection{Streaming Vertex Coloring}

All logarithms in the main text will take base $2$ by default when we do not specify the base. For each positive value $x\geq 1$, let $\floor{x}_2$ be the largest integer power of $2$ less or equal to $x$. The input graph is denoted by $G = (V, E)$ on $n$ vertices and maximum degree $\Delta$. For each vertex $v\in V$, let $N(v)$ be the set of neighbors around $v$ in $G$.

We are working in the semi-streaming model where the edges of $G$ are read one-by-one in an arbitrary order, and the algorithm has $\tilde{O}(n)$ bits of space. Our goal is to compute a $(1+\eta)\Delta$-coloring of $G$ for any fixed constant $\eta\in (0, 1)$ (without loss of generality, let us assume $\eta\Delta$ is an integer by rounding downward).

\begin{definition}[partial coloring]
A \emph{partial coloring} of a graph $G=(V,E)$ is a mapping $\phi : V \to [(1+\eta)\Delta] \cup \{\perp\}$, where $\phi(v) = \perp$ indicates that vertex $v$ is currently uncolored. We require that for every edge $(u,v) \in E$ with both endpoints colored, $\phi(u) \neq \phi(v)$, i.e., the coloring induces no monochromatic edge among colored vertices. For an uncolored vertex $v\in V$ and a color $c$, we say $c$ is \emph{available} to $v$ if $c$ is not assigned to any of $v$'s neighbors under $\phi$.
\end{definition}

\begin{definition}[$(\deg+1)$-list coloring]
    In a $(\deg+1)$-list coloring problem, for each vertex $v$ in the input graph $G$, we are given a list $L_x$ of at least $\deg_G(x)+1$ available colors for $v$, and the goal is to compute a proper vertex coloring of $G$ such that each vertex $v$ chooses one color from its own list $L_v$.
\end{definition}
It is a folklore that a $(\deg+1)$-list coloring can be found in a greedy manner once the whole input is stored in memory.

\iffalse
\begin{definition}[extension of a partial coloring]
Given a partial coloring $\phi$, an \emph{extension} $\phi'$ of $C$ is another partial coloring satisfying:
\begin{itemize}
    \item For every vertex $v$ with $\phi(v) \neq \perp$, we have $\phi'(v) = \phi(v)$.
    \item For every vertex $v$ with $\phi(v) = \perp$, we have $\phi'(v) \in [c] \cup \{\perp\}$.
\end{itemize}
\end{definition}
\fi

\begin{restatable}[\cite{assadi2022deterministic}]{lemma}{extendcolor}
\label{lemma:prep}
Let $G$ be an $n$-vertex graph with maximum degree $\Delta$, presented in an insertion-only stream, and let $\phi$ be a partial coloring using at most $(1+\eta)\Delta$ colors.  There exists a deterministic algorithm that, using $O(1)$ passes and $\tilde{O}(n)$ bits of space, extends $\phi$ to a new coloring $\phi'$ such that 
$|\phi'^{-1}(\perp)| < \brac{1 - \Omega(\eta)} |\phi^{-1}(\perp)|$; in other words, the number of uncolored vertices drops by a $1 - \Omega(\eta)$ factor.
\end{restatable}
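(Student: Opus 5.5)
We prove \Cref{lemma:prep} by derandomizing the random color trial sketched in the technical overview, using a pairwise independent hash family in place of truly random colors. Let $U=\phi^{-1}(\perp)$ and $K=(1+\eta)\Delta$. Fix a pairwise independent family $\mathcal{H}$ of functions $h:V\to[K']$ of size $\mathrm{poly}(n)$, where $K'\le K$ is a suitable palette size. Each $h\in\mathcal{H}$ is described by $O(\log n)$ bits. For a vertex $v\in U$, we say $v$ is \emph{blocked} under $h$ if one of the following holds:
\begin{itemize}
\item some $u\in N(v)\cap U$ has $h(u)=h(v)$;
\item some colored neighbor $w\in N(v)\setminus U$ has $\phi(w)=h(v)$.
\end{itemize}
Every unblocked vertex can be safely assigned $h(v)$. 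Two unblocked vertices never receive conflicting colors, since adjacent uncolored vertices with equal hash values are blocked by the first condition.

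The first step is to bound the expected number of blocked vertices over a uniformly random $h\in\mathcal{H}$, using only pairwise independence. For the first condition, a union bound over neighbors gives probability at most $\deg(v)/K'\le \Delta/K'$. For the second condition, $h(v)$ is uniform on $[K']$, and the colored neighbors occupy at most $\deg(v)\le\Delta$ distinct colors, so the probability is at most $\Delta/K'$. The naive bound $2\Delta/K'$ falls below $1$ only if $K'>2\Delta$, whereas we only have $(1+\eta)\Delta$ colors. This is the main obstacle, and it is what the paper calls a minor adjustment of the subroutine of \cite{assadi2022deterministic}.

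To handle it, I would charge conflicts more carefully. Write $d_U(v)=|N(v)\cap U|$ and $d_C(v)=\deg(v)-d_U(v)$. Then the blocking probability is at most $d_U(v)/K'+d_C(v)/K'\le\Delta/K'$, where $d_C(v)$ bounds the number of distinct forbidden colors.

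The catch is that the first term needs only a union bound over the $d_U(v)$ uncolored neighbors, and the two events together cover at most $\deg(v)\le\Delta$ "bad" values. More precisely, the event that $h(v)$ lies in the forbidden set $F_v=\phi(N(v))$ or that $h(v)=h(u)$ for some $u$ has probability at most $|F_v|/K' + d_U(v)/K'\le \Delta/K'$. Taking $K'=K=(1+\eta)\Delta$ gives blocking probability at most $1/(1+\eta)=1-\Omega(\eta)$. Consequently, the expected number of unblocked vertices is at least $\Omega(\eta)|U|$, so some $h^\ast\in\mathcal{H}$ achieves at least this many.

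The second step finds such an $h^\ast$ within $\tilde O(n)$ space and $O(1)$ passes. Enumerating all of $\mathcal{H}$ with a counter per vertex is too expensive. Instead, I would use the method of conditional expectations over the $O(\log n)$ seed bits in chunks: each pass fixes $\Theta(\log n)$ bits by trying all $\mathrm{poly}(n)$ extensions simultaneously. That costs too much space per vertex, so a better route is to fix the seed bit by bit, $O(1)$ bits per pass. That, however, uses $O(\log n)$ passes, which is too many.

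To get $O(1)$ passes, I would instead follow \cite{assadi2022deterministic}. Choose a smaller family of size $n^{\delta}$; almost pairwise independence suffices, with error absorbed into $\eta$. Then use one pass in which every vertex $v\in U$ maintains, for each $h$, the single bit ``blocked or not''. This costs $|U|\cdot n^{\delta}$ bits, which is too large, so I would group the hash functions into $O(1/\delta)$ batches processed over $O(1)$ passes. In each batch, only aggregate counts are kept. Specifically, for each $h$ we accumulate the number of blocked vertices, namely the number of $v$ with at least one witness edge. To avoid double counting per vertex, I would use the edge count $\sum_v(\#\text{witnesses})$, which is an upper bound whose expectation satisfies the same $\Delta/K'$ bound. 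This needs only one counter per $h$, so $\tilde O(|\mathcal{H}|)$ space.

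Each edge $(u,v)$ updates the counters of the $h$ it witnesses. This is computable on the fly because $\phi$ is stored and $h$ is evaluable from its seed. Enumerating all $h$ per edge takes time but no extra space, so polynomial $|\mathcal{H}|=\mathrm{poly}(n)$ is fine in a single pass. We then pick $h^\ast$ with the minimum count, and a second pass assigns $h^\ast(v)$ to every unblocked $v$. The resulting $\phi'$ satisfies $|\phi'^{-1}(\perp)|\le(1-\Omega(\eta))|\phi^{-1}(\perp)|$.
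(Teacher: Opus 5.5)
Your first step is correct and is the same calculation as the paper's. The expected number of witness incidences, $\sum_{v\in U}\deg(v)/K'\le \Delta|U|/K'$, bounds the expected number of blocked vertices, so some hash function leaves at most a $1/(1+\Omega(\eta))$ fraction of $U$ uncolored. The ``$2\Delta/K'$'' worry is not a real obstacle. One small fix: to have an exactly pairwise independent family, take $K'$ to be a prime $p\in[(1+\eta/2)\Delta,(1+\eta)\Delta]$ and use $h_{\vec x,b}(v)=\vec x\cdot\vec v+b \bmod p$ with $\vec v\in[p]^d$, $d=\lceil\log_p n\rceil$, as the paper does.

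The genuine gap is the derandomization. Your final scheme keeps one counter per $h\in\mathcal H$, which costs $\Omega(|\mathcal H|)$ bits. You then claim $|\mathcal H|=\mathrm{poly}(n)$ is fine in one pass; that holds for time, not for space. The family above has $p^{d+1}$ members, which can be about $n\Delta^2$, far beyond $\tilde O(n)$. The variant with an almost pairwise independent family of size $n^\delta$ is unsupported, and it is impossible once $\Delta\ge n^\delta$. If $|\mathcal H|\le\Delta$, then $h(v)$ takes at most $\Delta$ values over $\mathcal H$. A $\phi$ whose colors on $v$'s $\Delta$ colored neighbors cover those values blocks $v$ under every $h$. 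Batching into $O(1/\delta)$ groups changes none of these counts. This, not the union bound, is what the paper's ``minor adjustment'' is about. Near-universal hashing as in \cite{assadi2022deterministic} loses a constant factor that $(1+\eta)\Delta$ colors cannot absorb, while exactly universal families are too large to search in a single pass.

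The missing idea is the one you discarded: conditional expectations over chunks of the seed. It needs no per-vertex state once you use your own edge-count surrogate. The conditional expectation of the witness count, given the fixed part of the seed, is a sum of per-edge terms, so each candidate value of the next chunk needs only one global counter. Write the seed $(\vec x,b)$ as $(y_1,\ldots,y_{d+1})\in[p]^{d+1}$. Each pass tries all $p^{\lfloor\log_p n\rfloor}\le n$ values of the next $\lfloor\log_p n\rfloor$ coordinates in parallel, using $O(n\log n)$ bits. It fixes the value with the smallest conditional expectation, so the expectation never exceeds $\Delta|U|/p$. Since $h_{\vec x,b}$ is linear in the seed, each edge's conditional witness probability is $0$, $1$ or $1/p$, computed on the fly from the stored $\phi$. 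Because $d+1\le\lfloor\log_p n\rfloor+2$, this takes $O(1)$ passes. One more pass then marks the blocked vertices under the resulting $h^*$. This is exactly the paper's proof.
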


The original statement in \cite{assadi2022deterministic} only focused on $6\Delta$-coloring and did not optimize the constant in front of $\Delta$, so we will verify this statement 
in the appendix.

\subsection{Streaming Frequent Items}
We will also apply a simple deterministic algorithm for finding frequent items in a data stream which originally appeared in \cite{10.5555/867576}. %\hongyi{Misra–Gries heavy hitters algorithm? Reference: Finding repeated elements. Misra, David Gries}

\begin{lemma}[\cite{cmu15451lec06,10.5555/867576}]\label{lem:freq}
    Given a data stream $x_1, x_2, \ldots, x_N\in [M]$. Then, for any positive integer $k\geq 1$, there is a deterministic algorithm using $O(k\log N)$ space outputs a list of $\ceil{s}$ items $y_1, y_2, \ldots, y_k\in [M]$ such that any integer $y$ which appears in the stream more than $N/k$ times is included in the list $\{y_1, y_2, \ldots, y_{k}\}$.
\end{lemma}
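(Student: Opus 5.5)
We will prove \Cref{lem:freq} with the classical Misra--Gries summary. The algorithm keeps at most $k-1$ pairs $(y,\cnt(y))$ with $y\in[M]$ and $\cnt(y)$ a positive integer counter; this uses $O(k\log N)$ bits of space, treating item identifiers as $O(\log N)$-bit words as the statement implicitly does, or charging $O(k(\log N+\log M))$ otherwise. When $x_t$ arrives there are three cases. If $x_t$ is already stored, we increment its counter. Otherwise, if fewer than $k-1$ items are stored, we insert $x_t$ with counter $1$. Otherwise all $k-1$ slots are full, and we decrement every stored counter by one (conceptually together with the implicit count $1$ of $x_t$) and delete every item whose counter reaches $0$. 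At the end we output the stored items, padded arbitrarily to a list of $k$ items.

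For correctness we fix any $y\in[M]$, let $f_y$ be its number of occurrences, and aim to show $f_y-\cnt(y)\le N/k$, where $\cnt(y)=0$ if $y$ is not stored. Each occurrence of $y$ that is not reflected in the final counter was lost to a decrement step. Such a step happens either when $y$ arrives and is discarded, or when $y$ is stored and its counter is decremented. Every decrement step removes exactly $k$ units of count simultaneously: one from each of the $k-1$ stored counters plus the arriving item. The total count ever added is $N$, and the counters are always nonnegative, so the number of decrement steps $D$ satisfies $kD\le N$. Each decrement step reduces the contribution of $y$ by at most $1$, hence $f_y-\cnt(y)\le D\le N/k$.

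It follows that if $f_y>N/k$, then $\cnt(y)>0$, so $y$ is stored at the end and appears in the output list.

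The only delicate point is the accounting in the decrement step. We must charge the arriving item as one of the $k$ units removed, so that each decrement step costs exactly $k$ units against the total of $N$. This is what yields the bound $D\le N/k$ and hence the threshold $N/k$ rather than $N/(k-1)$. The remaining steps are routine.
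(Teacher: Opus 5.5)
Your proof is correct. It is exactly the Misra--Gries summary the paper relies on; the paper gives no proof of its own and only cites that algorithm. Your accounting (each decrement step destroys $k$ units, the $k-1$ stored counters plus the discarded arrival, so $D\le N/k$ and hence $f_y-\cnt(y)\le N/k$) is the standard analysis, and reading the output as $k$ padded items rather than the typo $\lceil s\rceil$ is the intended interpretation.
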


\subsection{Almost $k$-Wise Independence}

\begin{definition}[almost $k$-wise independence]
    A sequence of random variables $X_1, X_2, \ldots, X_N\in [M]$ is $(k, \epsilon)$-wise independent, if for any $k$ different indices $1\leq i_1, i_2, \ldots, i_k\leq N$ and $k$ values $y_1, y_2, \ldots, y_k\in [M]$, we have:
    $$\left|\Pr\left[X_{i_j} = y_j, \forall 1\leq j\leq k\right] - \frac{1}{M^k}\right|\leq \epsilon$$
\end{definition}

\begin{definition}[almost $k$-wise independent hash functions]
    A family $\mathcal{H}$ of functions $h: [N]\rightarrow [M]$ is $(k, \epsilon)$-wise independent, if the random sequence $h(1), h(2), \ldots, h(N)$ is $(k, \epsilon)$-wise independent when $h$ is drawn uniformly at random from $\mathcal{H}$.
\end{definition}

The following statement is implicit in \cite{alon1992simple}, 
and we will provide a proof in the appendix.

\begin{restatable}[\cite{alon1992simple}]{lemma}{almostkwise}\label{lem:hash}
    For any tuple of $(k, \epsilon, N, M)$ where $M$ is an integer power of $2$, there exists a $(k, \epsilon)$-wise independent hash family $\mathcal{H}_{k, \epsilon, N, M}$ mapping from $[N]$ to $[M]$ of size at most $$|\mathcal{H}_{k, \epsilon, N, M}|\leq O\brac{\brac{\frac{1}{\epsilon}}^3M^{3k}\log^3N}$$ %\hongyi{$\log^3N \to \log^3 (N\log M)?$}
\end{restatable}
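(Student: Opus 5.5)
We will obtain $\mathcal{H}_{k,\epsilon,N,M}$ from a small-bias sample space over bits, following \cite{alon1992simple}. Write $M=2^m$ and identify each value in $[M]$ with an $m$-bit string. A function $h:[N]\to[M]$ is then a string of $Nm$ bits, where block $i$ holds $h(i)$. We will take a $\delta$-biased distribution $D$ on $\{0,1\}^{Nm}$: every nonempty linear test $\bigoplus_{j\in S} z_j$ has $|\Pr[\oplus=0]-\tfrac12|\le\delta$. Let $\mathcal{H}$ be the multiset of strings in the support of $D$. By \cite{alon1992simple}, using for instance the powering or LFSR construction, there is a $\delta$-biased space over $L$ bits of size $O\brac{(\log L/\delta)^2}$. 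We only need a bound of the form $O\brac{(\log L/\delta)^{3}}$, which leaves some slack.

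The key step is to pass from small bias to almost $k$-wise independence. Fix $k$ distinct indices $i_1,\dots,i_k$. The restriction of $z$ to their $km$ bit positions gives a distribution $P$ on $\{0,1\}^{km}$. Every nonzero Fourier coefficient of $P$ is a parity over a nonempty subset of these $km$ positions, so it has magnitude at most $2\delta$. By Fourier inversion,
$$\Big|P(y)-2^{-km}\Big| = \Big|2^{-km}\sum_{S\neq\emptyset}\hat P(S)(-1)^{\langle S,y\rangle}\Big| \le 2\delta$$
for every $y\in\{0,1\}^{km}$. Each tuple of values $(y_1,\dots,y_k)\in[M]^k$ corresponds to exactly one such $y$. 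Hence
$$\left|\Pr[h(i_j)=y_j\ \forall j]-M^{-k}\right|\le 2\delta .$$
If we want a cleaner, slightly lossy version instead, we can use the standard bound $\sum_y|P(y)-2^{-km}|\le 2^{km/2}\delta$. Either way, choosing $\delta=\epsilon/2$ in the first version, or $\delta=\epsilon M^{-k}$ in the lossy one, gives $(k,\epsilon)$-wise independence.

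Finally we plug in the parameters. We have $L=Nm=N\log M$, so $\log L=\log N+\log\log M$. With $\delta=\epsilon M^{-k}$ the family size is
$$O\brac{\brac{\frac{M^k\log(N\log M)}{\epsilon}}^2}.$$
This is at most $O\brac{\epsilon^{-3}M^{3k}\log^3 N}$. To see why, note that the $\log\log M$ term is absorbed because $\log(N\log M)\le \log N+\log M\le \log N\cdot O(\log M)$, and $O(\log M)\le O(M^k)$ contributes at most one extra power of $M^k$ beyond the square. The stated exponents $3$ leave room for exactly this absorption, and also for the edge case $N=1$, where we read $\log N$ as $\max(1,\log N)$.

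The main obstacle is getting the bias-to-independence conversion right, in particular matching the paper's pointwise error definition rather than statistical distance. The pointwise Fourier bound $2\delta$ handles this directly. We also need to cite the small-bias construction with an explicit size bound. If the $O((\log L/\delta)^2)$ construction is not taken as given, we can fall back on the three-construction analysis of \cite{alon1992simple}, whose sizes are of order $(L/\delta)^2$. That would not suffice because it is polynomial in $N$. So we must commit to the size that is logarithmic in $L$ (the powering construction, size $(2\log L/\delta)^2$). The remaining work is routine checking of the arithmetic above.
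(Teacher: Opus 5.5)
There is a genuine gap. Your bias-to-pointwise Fourier step and the block encoding are both fine. The problem is the ingredient everything rests on: a $\delta$-biased distribution on $\{0,1\}^L$ of size $O((\log L/\delta)^2)$ does not exist. Suppose the support is $\{z^{(1)},\dots,z^{(s)}\}$ with $s<L$. These vectors span a proper subspace of $\mathbb{F}_2^L$, so some nonempty $S\subseteq[L]$ is orthogonal to all of them. Then $\bigoplus_{j\in S}z_j$ is identically $0$, i.e., it has bias $1/2$. Hence any $\delta$-biased space with $\delta<1/2$ has at least $L$ points. The LFSR, powering and quadratic-character constructions of \cite{alon1992simple} all have size roughly $(L/\delta)^2$. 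For example, the powering construction over $GF(2^q)$ has $2^{2q}$ points and bias at most $(L-1)/2^q$. What is logarithmic is the seed length, about $2\log(L/\delta)$, not the size. With $L=N\log M$ the family is therefore polynomial in $N$. That is exactly the outcome you yourself say would not suffice: in the application $N=|\phi^{-1}(\bot)|$, and $|\mathcal{H}_1|\cdot|\mathcal{H}_2|$ counters are stored per uncolored vertex.

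The missing idea is that your Fourier step never uses full small bias. It only needs parities supported on the $km$ bit positions of $k$ blocks to have bias at most $\delta$. That weaker property is what \cite{alon1992simple} actually construct. Take a $\delta$-biased space on only $r=O(km\log L)$ seed bits, which has size $O((r/\delta)^2)$. Expand it by a fixed linear map $\mathbb{F}_2^r\to\mathbb{F}_2^L$ in which any $km$ output coordinates are linearly independent forms of the seed, e.g., the columns of a parity-check matrix of a binary BCH code of designed distance $km+1$. Every nonempty parity of at most $km$ output bits is then a nonzero parity of the seed. So your pointwise bound $|P(y)-2^{-km}|\le 2\delta$ goes through unchanged, and $\delta=\epsilon/2$ gives size $O\big((k\log M\cdot\log(N\log M)/\epsilon)^2\big)$, within the stated bound.

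The paper's proof simply uses this construction as a black box. It applies the bit-valued almost $k$-wise independent family of \cite{alon1992simple} with $N'=N\log M$ bits and $k'=k\log M$, then reads consecutive $\log M$-bit blocks as values in $[M]$. The guarantee on any $k'$ positions is stated in statistical distance, hence also holds pointwise. It transfers directly because each tuple $(y_1,\dots,y_k)\in[M]^k$ is a single $k'$-bit pattern on the corresponding blocks.
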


We need a concentration inequality for almost $k$-wise independent variables. This question was previously studied in \cite{gravin2021concentration} motivated by some applications in cryptography, but we cannot apply their results directly in our scenario because their definitions of almost $k$-wise independence are technically different from ours, 
so we will present a proof of the following statement in the appendix.

\begin{restatable}{lemma}{almostkwiseconc}\label{lem:conc}
    Consider a sequence of random variables $X_1, X_2, \ldots, X_N\in [M]$ which is $(k, \epsilon)$-wise independent for some even integer $k$, along with a sequence of non-negative weights $$w_1, w_2, \ldots, w_N\leq W$$ Define random variables $Y_i = \mathbf{1}[X_i = 1]$. Then for any $\delta$, we have:
    $$\Pr\left[\left|\sum_{i=1}^N w_i\cdot Y_i - \frac{1}{M}\sum_{i=1}^N w_i\right| > \frac{\delta WN}{M}\right]\leq \brac{\frac{M\sqrt{k}}{\delta \sqrt{N}}}^k + \epsilon (2 M^2 / \delta)^k$$
\end{restatable}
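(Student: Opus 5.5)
We will prove the bound with the $k$-th moment method, applied to the centered sum. Let $p = 1/M$ and define $Z_i = w_i(Y_i - p)$ and $S = \sum_i Z_i$. The event in question is $|S| > \delta W N/M$. Since $k$ is even, $S^k\ge 0$, and Markov's inequality gives
\[
\Pr\left[|S| > \tfrac{\delta WN}{M}\right] \le \frac{\mathbf{E}[S^k]}{(\delta WN/M)^k}.
\]
Expanding,
\[
\mathbf{E}[S^k] = \sum_{(i_1,\ldots,i_k)\in[N]^k} \mathbf{E}\Bigl[\prod_{j=1}^k Z_{i_j}\Bigr].
\]
For each index tuple we compare this expectation with its value under fully independent $X_i$ uniform on $[M]$. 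This splits $\mathbf{E}[S^k]$ into an ideal term and an error term. The ideal term will produce the first summand of the claimed bound and the error term the second.

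\textbf{Ideal term.} Under true independence, $\mathbf{E}[Z_i]=0$. Hence every tuple in which some index appears exactly once contributes zero. In the remaining tuples every distinct index appears at least twice, so there are $t\le k/2$ distinct indices.
\begin{itemize}
\item Each factor $Z_i^{m}$ with $m\ge 2$ satisfies $\mathbf{E}|Z_i|^m \le W^m\,\mathbf{E}|Y_i-p|^2 \le W^m p$.
\item So a tuple with $t$ distinct indices contributes at most $W^k p^t$.
\item The number of such tuples is at most $\binom{N}{t} t^k \le N^t t^k$.
\end{itemize}
Summing over $t\le k/2$ gives $\mathbf{E}_{\mathrm{ind}}[S^k] \lesssim \sum_{t\le k/2} W^k (N/M)^t t^k$. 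Dividing by $(\delta WN/M)^k$, the dominant term $t=k/2$ becomes roughly $\bigl(\tfrac{M}{N}\bigr)^{k/2}\tfrac{k^k}{\delta^k}$, which is of the form $\bigl(\tfrac{M\sqrt{k}}{\delta\sqrt N}\bigr)^k$ up to constants. Terms with smaller $t$ are dominated in the regime where the bound is nontrivial. I expect to tighten $t^k$ with the standard combinatorial bookkeeping, or to absorb the constants into the loose $M$ versus $\sqrt M$ factor.

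\textbf{Error term.} Fix a tuple with $t\le k$ distinct indices. The product $\prod_j Z_{i_j}$ is a function $f$ of $(X_{i_1},\ldots,X_{i_t})\in[M]^t$ with $|f|\le W^k$. Hence
\[
\bigl|\mathbf{E}f - \mathbf{E}_{\mathrm{ind}} f\bigr| \le \sum_{y\in[M]^t} |f(y)|\,\bigl|\Pr[X=y]-M^{-t}\bigr| \le W^k M^t \epsilon.
\]
This uses the $(k,\epsilon)$ condition on $t$ indices. The definition is stated for exactly $k$ indices, but it extends to $t<k$ indices by marginalizing over $k-t$ extra indices, at the cost of an $M^{k-t}$ factor. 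This may require $N\ge k$, which can be assumed. After marginalization the per-tuple error is at most $\epsilon W^k M^k$. Summing over at most $N^k$ tuples and dividing by $(\delta WN/M)^k$ gives $\epsilon (M^2/\delta)^k$. The factor $2^k$ in the statement leaves room for the marginalization loss and the constant slack.

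\textbf{Main obstacle.} The delicate step is the counting in the ideal term: getting exactly $(M\sqrt{k}/(\delta\sqrt N))^k$ rather than a bound with extra $k^{k/2}$ or $2^k$ factors. I would first try the clean estimate
\[
\#\{\text{tuples with } t \text{ distinct indices, each repeated}\} \le \binom{k}{2t}\ldots,
\]
and failing that, bound the whole sum by $(k/2)\cdot \max_t$ and check that for $N\ge k M$ the $t=k/2$ term dominates. In the complementary range $N < kM$ the first summand already exceeds $1$, so the inequality holds trivially; this should let us ignore small-$N$ cases.
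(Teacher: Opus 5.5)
Your skeleton is the paper's: Markov on the even $k$-th moment, comparison of every index tuple with an idealized independent model, and marginalizing the $(k,\epsilon)$ condition to get a per-tuple error $\epsilon W^kM^k$ over $N^k$ tuples. Your error term is fine, and even slightly sharper than the paper's $\epsilon(2M^2/\delta)^k$.

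The difference, and the gap, is the ideal term. The paper does not count it by hand. It cites Skorski's moment bound for $k$-wise independent sums with the crude variance bound $\sigma^2\le 1$, which gives $W^k(kN)^{k/2}$ directly. Your count of $N^t t^k$ tuples is too lossy. For $t=k/2$, the term divided by $(\delta WN/M)^k$ equals $(k/(4M))^{k/2}$ times the target $(M\sqrt{k}/(\delta\sqrt N))^k$, independently of $N$. So the estimate fails whenever $M<k/4$ (e.g.\ $k=10$, $M=2$).

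Neither of your fallbacks closes this.
\begin{itemize}
\item Knowing that $t=k/2$ dominates for $N\ge kM$ does not help, because that term is itself too large.
\item The claim that the first summand exceeds $1$ for $N<kM$ is false in general. The first summand is $\ge 1$ iff $N\le kM^2/\delta^2$, and this follows from $N<kM$ only if $\delta\le\sqrt M$.
\end{itemize}
As written, the argument covers only $M\gtrsim k$. That suffices for the paper's application ($k=10$, huge $M$) but not for the lemma as stated.

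The repair is short.
\begin{itemize}
\item You may assume $\delta<M$: for $M\ge 2$ we have $|\sum_i w_i(Y_i-1/M)|\le WN(1-1/M)$, so otherwise the event is empty. Then $N\le k$ already makes the first summand at least $1$, so assume $N\ge k$. This also justifies the marginalization your error term needs.
\item Keep the $1/t!$ from $\binom{N}{t}\le N^t/t!$. Using $N^t\le N^{k/2}k^{t-k/2}$ and $t^k\le (k/2)^k$, the $t$-th term is at most $W^k(kN)^{k/2}2^{-k}(k/M)^t/t!$.
\item Summing over $t\ge 1$ gives at most $W^k(kN)^{k/2}2^{-k}e^{k/M}\le W^k(kN)^{k/2}(\sqrt{e}/2)^k\le W^k(kN)^{k/2}$ for $M\ge 2$. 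The case $M=1$ is trivial.
\end{itemize}
Markov then gives exactly $(M\sqrt k/(\delta\sqrt N))^k$. This elementary count replaces the paper's citation of Skorski's bound.
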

\section{The Main Algorithm}
Throughout the algorithm we maintain a partial coloring $\phi$ of the graph using at most $(1+\eta)\Delta$ colors. As a preprocessing step, we apply the algorithm of Lemma~\ref{lemma:prep} iteratively for 
$O\brac{\frac{\alpha}{\eta}\sqrt{\log \Delta}}$ rounds, where $\alpha = 200$ is a constant parameter.
After these rounds, the number of uncolored vertices reduces geometrically, leaving at most 
$n/2^{2\alpha\sqrt{\log \Delta}}$ uncolored vertices. After the preprocessing, the main technical part of the algorithm is summarized as the following statement.
\begin{lemma}
\label{lemma:extend}
Let $\phi$ be a partial coloring with $(1+\eta)\Delta$ colors such that $|\phi^{-1}(\perp)| \leq n / s^2$ for a size parameter $s \ge 2^{\alpha\sqrt{\log \Delta}}$ ($s$ does not need to be an integer). 
There exists a deterministic algorithm that, using $O\brac{\frac{\log\Delta}{\log s}}$ passes and $\tilde{O}(n)$ bits of space, 
extends $\phi$ to a new coloring $\phi'$ satisfying $|\phi'^{-1}(\perp)| < \frac{|\phi^{-1}(\perp)|}{s^{1/10}}$.
\end{lemma}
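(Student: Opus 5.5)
We will prove \Cref{lemma:extend} by recursive vertex/color partition refinement, following the technical overview. Set $U = \phi^{-1}(\perp)$ and choose the range size $M = \floor{s^{c}}_2$ for a small constant $c$ (say $c = 1/20$). We also fix a constant even $k$ and take $\epsilon = s^{-\Theta(k)}$. We maintain a collection of disjoint vertex classes $B_1,\dots,B_l \subseteq U$, a bad set $B_{\text{bad}}$, and sub-palettes $C_1,\dots,C_l$ partitioning $[(1+\eta)\Delta]$. The invariant is that every $v\in B_i$ satisfies
$$\bigl|C_i \setminus \{\phi(u) : u\in N(v)\}\bigr| \ \ge\ (1+\eta/2)^{-1}\cdot\Bigl(\bigl(1+\tfrac{\eta}{2}\bigr)\deg_{G[B_i]}(v)+\tfrac{\eta}{4}\cdot\tfrac{|C_i|}{1+\eta}\Bigr),$$
that is, $v$ has a slack of roughly $\eta|C_i|$ available colors beyond its degree inside $B_i$. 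Initially $l=1$, $B_1=U$, $C_1$ is the full palette, and the invariant holds because $v$ has at least $(1+\eta)\Delta - \deg(v) \ge \eta\Delta$ more available colors than uncolored neighbors. Each refinement round shrinks every $|C_i|$ by a factor of $M$. After $r=\lceil\log_M\Delta\rceil = O(\log\Delta/\log s)$ rounds the sub-palettes have size $O(1)$ (or $O(\mathrm{poly}\,s)$ if we stop slightly earlier). We then make one final pass, in which each $v\in B_i$ stores the colors of $C_i$ used by its neighbors and its edges inside $G[B_i]$, and we greedily solve the resulting $(\deg+1)$-list coloring instances. Vertices in $B_{\text{bad}}$ remain uncolored, so it suffices to show $|B_{\text{bad}}|\le |U|/s^{1/10}$.

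A single refinement round takes $O(1)$ passes.
\begin{enumerate}[(a)]
\item In the first pass, for each $v\in B_i$ we run \Cref{lem:freq} with $k=\mathrm{poly}(M)$ on the stream of neighbor colors $\phi(u)\in C_i$. This identifies the frequent colors, those used by more than $\Delta/\mathrm{poly}(M)$ neighbors of $v$, and we store them. There are $\mathrm{poly}(M)$ of them per vertex.
\item In the second pass, for every pair $(h_1,h_2)\in\mathcal{H}_1\times\mathcal{H}_2$, where both families are $(k,\epsilon)$-wise independent from \Cref{lem:hash}, and every $v$, we compute two quantities. The first is the number of $u\in N(v)\cap B_i$ with $h_1(u)=h_1(v)$. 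The second is the linear estimator \eqref{ineq:linear-est} of used colors in the new sub-palette $\{c\in C_i : h_2(c)=h_1(v)\}$. For the estimator we count the frequent colors exactly, once each, since they are stored, and count each non-frequent colored neighbor individually. Vertex $v$ votes for the pair if its refined slack invariant still holds with a slightly reduced constant.
\end{enumerate}
Both families have size $\mathrm{poly}(M)\cdot\mathrm{polylog}\, n = s^{O(ck)}\,\mathrm{polylog}\, n$. Since $|U|\le n/s^2$ and $ck$ is small, the total space is $\tilde O(n)$. We take the pair with the most votes, refine $B_i$ into the classes $B_i\cap h_1^{-1}(j)$ and $C_i$ into $C_i\cap h_2^{-1}(j)$, and move all non-voters to $B_{\text{bad}}$. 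The slack must degrade only by a factor $(1-O(1/r))$ per round, or be budgeted as an additive $\eta/(8r)$ loss, so that it survives all $r$ rounds.

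The main obstacle is showing that for every $v$ a $(1-s^{-\Omega(1)})$ fraction of pairs gets $v$'s vote. We fix $v$ and apply \Cref{lem:conc} three times.
\begin{itemize}
\item Applied to $h_1$ restricted to $N(v)\cap B_i$ (shifted so that the target value is $h_1(v)$, after conditioning on $h_1(v)$ and using $(k+1,\epsilon)$-independence), it shows the in-class degree is at most $\deg_{G[B_i]}(v)/M$ plus a small error.
\item Applied to $h_2$ on $C_i$, it shows the new sub-palette has size $|C_i|/M\pm$ small.
\item Applied to $h_2$ with weights $w_c$ equal to the number of neighbors of $v$ with color $c$, restricted to non-frequent colors so that $W\le\Delta/\mathrm{poly}(M)$, it shows the linear estimator is concentrated around $1/M$ times the number of non-frequent colored neighbors. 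This is at most $1/M$ times the number of distinct used colors plus the overcount. The overcount, meaning repeated non-frequent colors, is already subtracted in the current slack because we apply the same estimator inductively.
\end{itemize}
The failure probability is $(M\sqrt{k}/(\delta\sqrt{N}))^k+\epsilon(2M^2/\delta)^k$. With $N\approx|C_i|\ge M$-scale and $\delta$ a small multiple of $\eta/r$, it is $s^{-\Omega(1)}$ provided $|C_i|\ge M^{3}$, say. This is why we stop refining once the palettes reach size $\mathrm{poly}(M)$. A corner case arises when $\deg_{G[B_i]}(v)$ is small relative to $|C_i|$: there the degree term is dominated by the palette slack, and we use Markov's inequality instead. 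Averaging gives a pair with at most an $s^{-\Omega(1)}$ fraction of non-voters per round. Over $r\le\log\Delta/(\alpha\sqrt{\log\Delta}\cdot\Omega(1))$ rounds, using $s\ge 2^{\alpha\sqrt{\log\Delta}}$ with $\alpha=200$, the accumulated bad fraction is $r\cdot s^{-\Omega(1)}\le s^{-1/10}$, which completes the plan.
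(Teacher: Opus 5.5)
Your architecture matches the paper's: refine bins and palettes with $(k,\epsilon)$-wise independent families, prune frequent colors with Misra--Gries, vote in one pass, send non-voters to $B_{\text{bad}}$, and finish by list coloring. The gap is in the step that carries the argument, the concentration of the used-color term, which fails as you set it up.

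\Cref{lem:conc} controls deviations only at scale $\delta WN/M$. Making its first error term smaller than $1$ forces $\delta\gtrsim M\sqrt{k/N}$, so the certified error is of order $W\sqrt{kN}$. It is driven by the largest weight, not by the actual total weight. With your cap $W\le\Delta/\mathrm{poly}(M)$ and $N$ up to $|C_i|$, this error is about $\Delta\sqrt{k|C_i|}/\mathrm{poly}(M)$, while the slack you must preserve is about $\eta|C_i|/M$. Already in round $0$ this needs $\mathrm{poly}(M)\gtrsim M\sqrt{k\Delta}/\eta$, which is false for $s=2^{\Theta(\sqrt{\log\Delta})}$; the space budget only allows about $s^2$ counters per vertex anyway.

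This is not mere looseness of the lemma. Your cutoff stays at $\Delta/\mathrm{poly}(M)$ while $|C_i|\approx\Delta/M^t$ shrinks. After $O(1)$ rounds, the only property of unpruned colors you use allows a vertex at minimum slack whose colored neighbors use about $M/(4\eta)$ colors, each of multiplicity $2\eta|C_i|/M$ (total $|C_i|/2$). Even under truly random hashing, the number of these colors landing in its new sub-palette exceeds its mean by at least one with constant probability. One extra color costs more than the whole new slack, so such a vertex is bad with constant probability, not $s^{-\Omega(1)}$.

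Two ingredients are missing.
\begin{enumerate}
\item The cutoff must be relative to the current palette. The paper removes the stored colors $F_v$ from $v$'s palette and runs \Cref{lem:freq} only on neighbors colored in $C_i\setminus F_v$. The linear-estimator form of \Cref{inv:deg} bounds this sub-stream's length by $|C_i\setminus F_v|$. Your invariant on distinct available colors gives no such bound, and the algorithm could not certify it anyway. With about $s$ counters, every remaining multiplicity is then at most $|C_i\setminus F_v|/s+2$ (\Cref{lem:remove-freq}).
\item Even with that cutoff, a single uniform-cap application of \Cref{lem:conc} is too weak. The paper splits the remaining colors into dyadic classes $D_r$ by multiplicity. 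For each class with $|D_r|>\sqrt{s}$ it applies the lemma with $W=2^{r+1}$ and $N=|D_r|$, so that $WN\le 2d_r$. Classes with $|D_r|\le\sqrt{s}$ are bounded deterministically by $\log\Delta\cdot\sqrt{s}\cdot 2(|C_i\setminus F_v|/s+2)$. This is negligible against $\eta|C_i|/M$ precisely because the cutoff is $|C_i|/s$, $\sqrt{s}\gg M\log\Delta$, and palettes are never refined below size about $s^2$ (rather than your $M^3$).
\end{enumerate}

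Finally, $c=1/20$ is too large. You apply \Cref{lem:conc} to both families, and fitting $|\mathcal{H}_1||\mathcal{H}_2|$ from \Cref{lem:hash} into the $s^2$ space budget then leaves no even $k$ with $\epsilon(2M^2/\delta)^k<s^{-1/10}$, up to polylogarithmic factors. So the exponent $1/10$ is not reached, which is why the paper takes $M=\floor{s^{1/200}}_2$.
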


By iteratively applying Lemma~\ref{lemma:extend}, we can decrease the total number of uncolored vertices from $n / 2^{2\alpha\sqrt{\log \Delta}}$ to $O(n / \Delta)$ within $O(\log\log \Delta)$ iterations. 
Once the number of remaining uncolored vertices is sufficiently small, 
all vertices and their incident edges can be collected into memory and colored using any list coloring algorithm.

\begin{proof}[Proof of~\Cref{theorem:main}] We combine the preprocessing step (\Cref{lemma:prep}), the iterative color extension step (\Cref{lemma:extend}), and a final coloring step to conclude the theorem. 

\paragraph*{Step 1: Preprocessing.} 

As said before, we will repeatedly apply~\Cref{lemma:prep} for $O\brac{\frac{\alpha}{\eta}\sqrt{\log \Delta}}$ rounds reduces the number of uncolored vertices from $n$ to at most $n / 2^{2\alpha\sqrt{\log \Delta}}$. This step uses $O(\sqrt{\log \Delta})$ passes and $\tilde{O}(n)$ bits of space.

\paragraph*{Step 2: Color Extension.} 
Let $\phi_0$ be the partial coloring right after the preprocessing step. The algorithm will iteratively apply $O(\log\log\Delta)$ times \Cref{lemma:extend} to reduce the number of uncolored vertices starting with $\phi_0$.
Let $U_t = \phi_t^{-1}(\perp)$ denote the set of uncolored vertices after the $t$-th iteration, where $\phi_t$ is the partial coloring after the $t$-th iteration. Assume $s_t = \sqrt{n / \left|\phi_t^{-1}(\bot)\right|}$, so initially we have $s_0 \geq 2^{\alpha \sqrt{\log\Delta}}$. Then, in the $(t+1)$-th iteration, apply \Cref{lemma:extend} on the partial coloring $\phi_t$ and size parameter $s_t$, and define $\phi_{t+1} = \phi_t'$ to be the extended coloring output by \Cref{lemma:extend}. We will argue that this iterative process shall terminate after $O(\log\log \Delta)$ iterations and the total number of passes over the data stream is bounded by $O(\sqrt{\log\Delta})$.

\begin{claim}
    After the $t$-th iteration, we have: $s_t \geq 2^{\brac{\frac{21}{20}}^t\cdot \alpha\sqrt{\log\Delta}}$.
\end{claim}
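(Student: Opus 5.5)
We argue by induction on $t$. For $t=0$ the bound is $s_0 \geq 2^{\alpha\sqrt{\log\Delta}}$, which holds because the preprocessing step leaves at most $n/2^{2\alpha\sqrt{\log\Delta}}$ uncolored vertices, and so $s_0 = \sqrt{n/|\phi_0^{-1}(\bot)|} \geq 2^{\alpha\sqrt{\log\Delta}}$.

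For the inductive step, assume $s_t \geq 2^{(21/20)^t\alpha\sqrt{\log\Delta}}$. In particular $s_t \geq 2^{\alpha\sqrt{\log\Delta}}$, since $(21/20)^t \geq 1$. By definition, $|\phi_t^{-1}(\bot)| = n/s_t^2$, so the hypotheses of \Cref{lemma:extend} are satisfied with size parameter $s_t$; the lemma explicitly allows a non-integer $s$. Applying it gives
\[
|\phi_{t+1}^{-1}(\bot)| < \frac{|\phi_t^{-1}(\bot)|}{s_t^{1/10}} = \frac{n}{s_t^{2+1/10}}.
\]
Therefore
\[
s_{t+1} = \sqrt{n/|\phi_{t+1}^{-1}(\bot)|} > s_t^{(2+1/10)/2} = s_t^{21/20} \geq 2^{(21/20)^{t+1}\alpha\sqrt{\log\Delta}}.
\]

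There is one degenerate case. If $\phi_{t+1}^{-1}(\bot)$ becomes empty, then $s_{t+1}$ is infinite and the claim holds vacuously; the process has terminated at that point anyway. Apart from this, the only thing to check is that the hypothesis $s_t \geq 2^{\alpha\sqrt{\log\Delta}}$ of \Cref{lemma:extend} is preserved from one iteration to the next, and this follows from the inductive bound because $s_t$ is monotonically increasing.
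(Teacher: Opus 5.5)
Your proof is correct and follows the paper's argument: induction on $t$, with the base case coming from the preprocessing bound and the inductive step applying \Cref{lemma:extend} to get $|\phi_{t+1}^{-1}(\bot)| < n/s_t^{21/10}$, hence $s_{t+1} \geq s_t^{21/20}$. Your checks that the hypothesis $s_t \geq 2^{\alpha\sqrt{\log\Delta}}$ of \Cref{lemma:extend} is preserved and that the empty-uncolored-set case is harmless are sound; the paper leaves both implicit.
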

\begin{proof}[Proof of claim]
    This is proved by a direct induction on $t$. As the basis, we already know that $s_0 \geq 2^{\alpha \sqrt{\log\Delta}}$. As for the inductive step, according to the statement of \Cref{lemma:extend}, we have: $$\left|\phi_{t+1}^{-1}(\bot)\right|\leq \left|\phi_t^{-1}(\bot)\right| / s_t^{1/10}\leq n / s_t^{21/10}$$
    By definition of $s_{t+1}$, we have: $s_{t+1}\geq s_{t}^{21/20}\geq 2^{\brac{\frac{21}{20}}^{t+1}\cdot \alpha\sqrt{\log\Delta}}$.
\end{proof}

According to \Cref{lemma:extend}, as the number of passes in the $t$-th iteration is bounded by $O\brac{\frac{\log\Delta}{\log s_t}} = O\brac{\brac{\frac{20}{21}}^t\cdot \sqrt{\log\Delta}}$, the total number of passes throughout all $O(\log\log\Delta)$ iterations is bounded by $O(\sqrt{\log\Delta})$.

\paragraph*{Step 3: In-Memory Coloring.}  
After Step 2, when $t = \Omega(\log\log\Delta)$, only $O(n/\Delta)$ vertices remain uncolored. We devote one additional pass to finish the coloring offline. During this pass, for each vertex $v \in \phi^{-1}(\perp)$ we maintain its current set of available colors. Whenever an edge incident on $v$ appears in the stream, we store it in memory and update the available colors accordingly. Because the remaining subgraph has only $O(n/\Delta)$ vertices, maximum degree at most $\Delta$, and each vertex stores a list of at most $O(\Delta)$ colors, the data structure fits in $O(n\log n)$ space.

After processing all the edges, we have stored in memory the induced subgraph $G[\phi^{-1}(\bot)]$ and all the available colors of each uncolored vertex. Then we can apply the standard greedy $(\deg+1)$-list coloring algorithm to complete the coloring of the remaining vertices.
\end{proof}

\subsection{Algorithm Description}

The rest of this section is devoted to \Cref{lemma:extend}. The algorithm consists of $O\brac{\frac{\log\Delta}{\log s}}$ rounds. Throughout the rounds, the algorithm maintains the following set of information.
\begin{itemize}[leftmargin=*]
    \item \textbf{Vertex Partition.} A partition of the uncolored vertex set $\phi^{-1}(\perp) = B_{\text{bad}}\cup B_1\cup B_2\cdots \cup B_l$, for some parameter $l$. Each vertex set $B_j$ will be called a bin of $\phi^{-1}(\bot)$.
    \item \textbf{Color Partition.} A partition of the color set $[(1+\eta)\Delta] = C_{\text{bad}}\cup C_1\cup C_2\cup \cdots \cup C_l$. Each color set $C_j$ will be called a palette.
    \item \textbf{Frequent Colors.} For each vertex $v\in \phi^{-1}(\bot)$, a set of colors $F_v\subseteq [(1+\eta)\Delta]$. We will make sure that the total size $\bigcup_{v\in \phi^{-1}(\bot)}F_v$ fits in the memory of semi-streaming. Intuitively, the set $F_v$ stores colors that are frequently used by neighbors of $v$ under the current partial coloring $\phi$.
\end{itemize}

At the beginning of the first round, we set $l=1$, $B_{\text{bad}} = C_{\text{bad}} = \emptyset$, $B_1 = \phi^{-1}(\bot)$, and $C_1 = [(1+\eta)\Delta]$. In the $t$-th round, we will divide all the sets among vertex and color partitions further. We will ensure the following invariants at the beginning of the $t$-th round.

\begin{invariant}\label{inv:deg}
    Define parameter $M = \floor{s^{1/200}}_2$. For any $1\leq i\leq l$, we have:
    $$|C_i|\leq (1+\eta)\Delta\cdot \brac{1 + \frac{1}{\log\Delta}}^t / M^t$$
    More importantly, for each $v\in B_i, \forall 1\leq i\leq l$, we have:
    $$|C_i\setminus F_v| - \brac{|\{u\in N(v)\mid u\in B_i\}| + |\{u\in N(v)\mid \phi(u)\in C_i\setminus F_v\}|}\geq \eta\Delta\cdot \brac{1 - \frac{1}{\log\Delta}}^t / M^t$$
    This implies that the number of available colors in $C_i\setminus F_v$ (colors that are not used by neighbors of $v$ under $\phi$) is much larger than the number of neighbors in set $B_i$.
\end{invariant}

Next, we are going to describe how to refine the vertex and color partitions in each round which consists of several steps.

\paragraph*{Pruning Frequent Colors.} The purpose of this procedure is to collect all the colors in $C_i\setminus F_v$ that are currently used frequently by $v$'s neighbors under partial coloring $\phi$, which is done as following. For each vertex $v\in B_i, \forall 1\leq i\leq l$, make a pass over the graph stream and consider all the neighbors $u\in N(v)$ such that $\phi(u)\in C_i\setminus F_v$. Apply \Cref{lem:freq} on the sub-stream of edges $\{\phi(u)\in C_i\setminus F_v\mid u\in N(v)\}$ with space parameter $k = \ceil{s^2}$. This generates a list of colors $L_v = \{c\in C_i\setminus F_v\}$ of at most $s^2$ elements. Since there are at most $n/s^2$ uncolored vertices under $\phi$, we can run all instances of the frequent item algorithm for all $v\in \phi^{-1}(\bot)$ at the same time in a single pass over the input graph stream.After that, merge the list $F_v\leftarrow F_v\cup L_v$.

\paragraph*{Refining Vertex and Color Partitions.} We are going to use almost $k$-wise independent hash functions to partition each bin $B_i$ and palette $C_i$ further. According to \Cref{lem:hash}, by setting the parameters $$k_1 = 10, \epsilon_1 =\epsilon = \brac{\frac{1}{10M^5}}^{k_1}, N_1 = \left|\phi^{-1}(\bot)\right|, M_1 = M= \floor{s^{1/200}}_2$$ 
we can find a $(k_1, \epsilon_1)$-wise independent hash family $\mathcal{H}_1$ of size at most $$\brac{\frac{1}{\epsilon_1}}^3\cdot M_1^{3k_1}\cdot \log^3N_1 = O(s\log^3n)$$ Similarly, by setting the parameters $$k_2 = 10, \epsilon_2 = \epsilon = \brac{\frac{1}{10M^5}}^{k_2}, N_2 = (1+\eta)\Delta, M_2 = M = \floor{s^{1/200}}_2$$
we can find a $(k_2, \epsilon_2)$-wise independent hash family $\mathcal{H}_2$ of size at most $$\brac{\frac{1}{\epsilon_2}}^3\cdot M_2^{3k_2}\cdot \log^3N_2 = O(s\log^3 \Delta)$$

Our goal is to find a good pair of hash functions $(h^*_1, h^*_2)\in \mathcal{H}_1\times \mathcal{H}_2$ and partition each bin $B_j$ and palette $C_j$ as
$$B_i = B_{i, h_1}^1\cup B_{i, h_1}^2\cup \cdots B_{i, h_1}^{M}$$
$$C_i = C_{i, h_2}^1\cup C_{i, h_2}^2\cup \cdots C_{i, h_2}^{M}$$
where $B_{i, h_1}^j = \{v\in B_i\mid h_1^*(v) = j\}, C_{i, h_2}^j = \{c\in C_i\mid h_2^*(c) = j\}$, such that we can verify \Cref{inv:deg} with the refined partitions.

\begin{definition}\label{def:good}
    For any pair of hash functions $(h_1, h_2)\in \mathcal{H}_1\times \mathcal{H}_2$, a vertex $v\in B_i, \forall 1\leq i\leq l$ is \emph{good} (otherwise \emph{bad}) if \Cref{inv:deg} holds for $v$; that is, 
    $$\begin{aligned}
    &\left|C_{i, h_2}^{h_1(v)}\setminus F_v\right| - \brac{\left|\left\{u\in N(v)\mid u\in B_{i, h_1}^{h_1(v)}\right\}\right| + \left|\left\{u\in N(v)\mid \phi(u)\in C_{i, h_2}^{h_1(v)}\setminus F_v\right\}\right|}\\
    &\geq \eta\Delta\cdot \brac{1 - \frac{1}{\log\Delta}}^{t+1} / M^{t+1}
    \end{aligned}$$
    as well as $$\left|C_{i, h_2}^{h_1(v)}\right|\leq (1+\eta)\Delta\brac{1+\frac{1}{\log\Delta}}^{t+1} / M^{t+1}$$
    This pair $(h_1, h_2)$ is \emph{good} if the fraction of bad vertices is less than $s^{-1/9}$.
\end{definition}

The algorithm tries to find a good pair $(h_1, h_2)\in \mathcal{H}_1\times \mathcal{H}_2$ in the straightforward manner. Basically, the algorithm makes one pass over the input graph stream, and each vertex keeps a counter $\cnt(v, h_1, h_2)$ for every pair $(h_1, h_2)\in \mathcal{H}_1\times \mathcal{H}_2$ which is equal to the value of 
$$\cnt(v, h_1, h_2) = \left|C_{i, h_2}^{h_1(v)}\setminus F_v\right| - \brac{\left|\left\{u\in N(v)\mid u\in B_{i, h_1}^{h_1(v)}\right\}\right| + \left|\left\{u\in N(v)\mid \phi(u)\in C_{i, h_2}^{h_1(v)}\setminus F_v\right\}\right|}$$
Maintaining the counter $\cnt(v, h_1, h_2)$ is straightforward: when an edge $(u, v)$ from the input stream is read, we check if $u\in B_{i, h_1}^{h_1(v)}$ and $\phi(u)\in C_{i, h_2}^{h_1(v)}\setminus F_v$ respectively and update $\cnt(v, h_1, h_2)$ accordingly. Besides, we also check whether the second inequality is preserved under $(h_1, h_2)$ for $v$: 
$$\left|C_{i, h_2}^{h_1(v)}\right|\leq (1+\eta)\Delta\brac{1+\frac{1}{\log\Delta}}^{t+1} / M^{t+1}$$
This decides whether a vertex is good or bad with respect to any pair $(h_1, h_2)$, so in the end we can find one good pair $(h_1, h_2)$ if it exists. The total required would be $|\mathcal{H}_1|\cdot |\mathcal{H}_2| = O(s^2\log^6n)$ for each pair, so the overall space would be $\tilde{O}(n)$ since $\left|\phi^{-1}(\bot)\right|\leq n/s^2$.

After we have found a good pair $(h_1^*, h_2^*)\in \mathcal{H}_1\times \mathcal{H}_2$ whose existence will be proven later, add all the bad vertices to $B_{\text{bad}}$, and update the vertex partition as:
$$\phi^{-1}(\bot) = B_{\text{bad}}\cup \bigcup_{i=1}^l\bigcup_{j=1}^M \tilde{B}_{i, h_1^*}^j$$
where $\tilde{B}_{i, h_1^*}^j = \{v\in B_{i, h_1^*}^j\mid v\text{ is good}\}$.

As for the color partition, if some sub-palette $C_{i, h_2^*}^j$ has size larger than $$(1+\eta)\Delta\brac{1+\frac{1}{\log\Delta}}^{t+1} / M^{t+1}$$ then merge it with $C_{\text{bad}}$. All other sub-palettes $C_{i, h_2^*}^j$ stay in the partition:
$$[(1+\eta)\Delta] = C_{\text{bad}}\cup\bigcup_{i=1}^l\bigcup_{1\leq j\leq M, |C_{i, h_2^*}^j|\leq (1+\eta)\Delta\brac{1+\frac{1}{\log\Delta}}^{t+1} / M^{t+1}} C_{i, h_2^*}^j$$

\paragraph*{In-Memory Coloring.} Assume \Cref{inv:deg} is preserved at the beginning of every round. Let $t^*$ be the smallest integer such that $$(1+\eta)\Delta\cdot\brac{1 + \frac{1}{\log\Delta}}^{t^*} / M^{t^*} \leq s^2$$
By definition, $M = \floor{s^{1/200}}_2 > 0.5\cdot 2^{\sqrt{\log\Delta}}$, and so $t^* = O\brac{\frac{\log\Delta}{\log M}} = O\brac{\frac{\log\Delta}{\log s}}$.
Then, after the $(t^*-1)$-th round, in the color partition $[(1+\eta)\Delta] = C_{\text{bad}}\cup C_1\cup \cdots\cup C_l$, for each $1\leq i\leq l$ we have:
$$|C_i| \leq (1+\eta)\Delta\cdot\brac{1 + \frac{1}{\log\Delta}}^{t^*} / M^{t^*} \leq s^2$$
Also, in the vertex partition $\phi^{-1}(\bot) = B_{\text{bad}}\cup B_1\cup \cdots \cup B_l$, for each $1\leq i\leq l$ and $v\in B_i$, we have:
$$\begin{aligned}
    &|C_i\setminus F_v| - \brac{|\{u\in N(v)\mid u\in B_i\}| + |\{u\in N(v)\mid \phi(u)\in C_i\setminus F_v\}|}\\
    &\geq \eta\Delta\cdot \brac{1 - \frac{1}{\log\Delta}}^{t^*} / M^{t^*} > 0
\end{aligned}$$
Consequently, we have:
$$\left|\brac{C_i\setminus F_v}\setminus \{u\in N(v)\mid \phi(u)\in C_i\setminus F_v\} \right| > |\{u\in N(v)\mid u\in B_i\}|$$
This means that the total number of neighbors $u\in N(v)$ that are in the same bin $B_i$ as $v$ is strictly less than the number of available colors of $v$ in $\brac{C_i\setminus F_v}\setminus \{u\in N(v)\mid \phi(u)\in C_i\setminus F_v\}$. So, there exists a color extension $\phi'$ of $\phi$ which assigns a color from $C_i\setminus F_v$ to $v$ for each $v\in B_i, \forall 1\leq i\leq l$. To find such a valid color extension $\phi'$, it suffices to store all the color lists $C_i\setminus F_v$ as well as all edges $(u, v)\in E[B_i], \forall 1\leq i\leq l$ in memory. Since the sets $\brac{C_i\setminus F_v}\setminus \{u\in N(v)\mid \phi(u)\in C_i\setminus F_v\}$ and $\{u\in N(v)\mid u\in B_i\}$ all have sizes at most $s^2$, we can compute and store them in memory in one pass over the data stream. The whole algorithm is summarized in \Cref{alg:color-extend}.

\iffalse
\begin{algorithm}
    \caption{$\mathsf{ColorExtend}(\phi)$}\label{alg:color-extend}
    $s\leftarrow \sqrt{n / \left|\phi^{-1}(\bot)\right|}$\;
    $M\leftarrow \floor{s^{1/200}}_2$\;
    start with trivial vertex and color partitions $B_{\text{bad}}, C_{\text{bad}}\leftarrow \emptyset, B_1 \leftarrow \phi^{-1}(\bot), C_1\leftarrow [(1+\eta)\Delta]$\;
    \For{$t = 0, 1, \ldots, O\brac{\frac{\log\Delta}{\log s}}$}{
        build almost $10$-wise independent hash families $\mathcal{H}_1, \mathcal{H}_2$\;
        find a good pair $(h_1^*, h_2^*)\in \mathcal{H}_1\times \mathcal{H}_2$ in a single pass over the input stream\;
        add all bad vertices to $B_{\text{bad}}$ and update the vertex partition as: $\phi^{-1}(\bot) = B_{\text{bad}}\cup \bigcup_{i=1}^l\bigcup_{j=1}^M \tilde{B}_{i, h_1^*}^j$\;
        also, merge the large sub-palettes $C_{i, h_2^*}^j$ with $C_{\text{bad}}$, and update the color partition as:
        $[(1+\eta)\Delta] = C_{\text{bad}}\cup\bigcup_{i=1}^l\bigcup_{j\text{ under some condition}} C_{i, h_2^*}^j$\;
    }
    compute and store all the color sets $\brac{C_i\setminus F_v}\setminus \{u\in N(v)\mid \phi(u)\in C_i\setminus F_v\}$ and subgraphs $G[B_i]$ in memory, $\forall 1\leq i\leq l$\;
    compute a list coloring which assigns colors to all vertices in $\phi^{-1}(\bot)\setminus B_{\text{bad}}$\;
\end{algorithm}
\fi
\begin{algorithm}[H]
    \caption{$\mathsf{ColorExtend}(\phi)$}\label{alg:color-extend}
    \SetKwInOut{Input}{Input}
    \Input{Partial coloring $\phi$ with at most $(1+\eta)\Delta$ colors}
    
    $s \leftarrow \sqrt{n / |\phi^{-1}(\perp)|}$\;
    $M \leftarrow \lfloor s^{1/200} \rfloor_2$\;
    Initialize $F_v \leftarrow \emptyset$ for all $v \in \phi^{-1}(\perp)$\;
    Start with trivial partitions: $B_{\text{bad}} \leftarrow \emptyset, C_{\text{bad}} \leftarrow \emptyset, B_1 \leftarrow \phi^{-1}(\perp), C_1 \leftarrow [(1+\eta)\Delta]$\;
    
    \For{$t = 0, 1, \ldots, O\left(\frac{\log\Delta}{\log s}\right)$}{
        \tcp{Step 1: Pruning Frequent Colors}
        Make one pass over the stream to find frequent colors $L_v \subseteq C_i \setminus F_v$ using \Cref{lem:freq} with $k = \lceil s^2 \rceil$ for all $v \in \phi^{-1}(\perp)$ simultaneously\;
        Update $F_v \leftarrow F_v \cup L_v$ for each $v \in \phi^{-1}(\perp)$\;
        
        \tcp{Step 2: Refining Partitions}
        Build $(10, \epsilon)$-wise independent hash families $\mathcal{H}_1, \mathcal{H}_2$ with parameters from text\;
        Find a good pair $(h_1^*, h_2^*) \in \mathcal{H}_1 \times \mathcal{H}_2$ in a single pass according to \Cref{def:good}\;
        
        \tcp{Step 3: Update Vertex and Color Bins}
        Add all bad vertices (that violate Invariant \ref{inv:deg}) to $B_{\text{bad}}$\;
        Update vertex partition: $\phi^{-1}(\perp) = B_{\text{bad}} \cup \bigcup_{i, j} \tilde{B}_{i, h_1^*}^j$\;
        Move large sub-palettes $|C_{i, h_2^*}^j| > (1+\eta)\Delta \frac{(1+1/\log\Delta)^{t+1}}{M^{t+1}}$ to $C_{\text{bad}}$\;
        Update color partition: $[(1+\eta)\Delta] = C_{\text{bad}} \cup \bigcup_{i, j} C_{i, h_2^*}^j$\;
    }
    
    \tcp{Final Step: In-Memory Coloring}
    Compute and store color sets $(C_i \setminus F_v) \setminus \{ \phi(u) : u \in N(v), \phi(u) \in C_i \setminus F_v \}$ and induced subgraphs $G[B_i]$ for all $1 \le i \le l$ in one pass\;
    Apply greedy list coloring to find $\phi'$ for all $v \in \phi^{-1}(\perp) \setminus B_{\text{bad}}$\;
    \Return $\phi'$\;
\end{algorithm}

\subsection{Proof of Correctness}
We first show that by merging the list $L_v$ with $F_v$, we have removed all the frequent colors from consideration.
\begin{lemma}\label{lem:remove-freq}
    Let $v\in B_i, i\geq 1$ be any vertex. After the algorithm merges the lists $L_v$ with sets $F_v$, for any color $c\in C_i\setminus F_v$, we have:
    $$|\{u\in N(v)\mid \phi(u) = c\}|\leq \frac{|C_i\setminus F_v|}{s}+2$$
    At the same time, we also have:
    $$\begin{aligned}
        |C_i\setminus F_v| - \brac{|\{u\in N(v)\mid u\in B_{i}\}| + |\{u\in N(v)\mid \phi(u)\in C_{i}\setminus F_v\}|}
        \geq \frac{\eta}{3}\cdot s^2 - \ceil{s}
    \end{aligned}$$
\end{lemma}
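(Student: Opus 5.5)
The plan is to handle the two inequalities separately, using only the guarantee of \Cref{lem:freq}, \Cref{inv:deg} at the start of round $t$, and the fact that $t<t^*$ in every round that is actually executed. Throughout, write $F_v^{\mathrm{old}}$ and $F_v$ for the set before and after the merge $F_v\leftarrow F_v\cup L_v$. Let $N_v=|\{u\in N(v)\mid \phi(u)\in C_i\setminus F_v^{\mathrm{old}}\}|$ be the length of the sub-stream on which we ran the frequent-items algorithm.

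\emph{First inequality.} By \Cref{inv:deg} applied to $F_v^{\mathrm{old}}$, we have $N_v\le |C_i\setminus F_v^{\mathrm{old}}|$, since the left side of the invariant is positive. By \Cref{lem:freq} with $k=\ceil{s^2}$, every color $c\in C_i\setminus F_v$ that survived the merge appears at most $N_v/k\le |C_i\setminus F_v^{\mathrm{old}}|/s^2$ times among the neighbors of $v$. It remains to express this in terms of the new palette size. The merge removes at most $|L_v|\le \ceil{s^2}$ colors, so $|C_i\setminus F_v^{\mathrm{old}}|\le |C_i\setminus F_v|+s^2+1$. Hence the count is at most $|C_i\setminus F_v|/s^2+1+1/s^2$, which is at most $|C_i\setminus F_v|/s+2$ because $s\ge 1$. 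This part is routine.

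\emph{Second inequality.} Compare the quantity $Q(F)=|C_i\setminus F|-|\{u\in N(v)\mid u\in B_i\}|-|\{u\in N(v)\mid\phi(u)\in C_i\setminus F\}|$ before and after the merge. Removing a single color $c$ from $C_i\setminus F$ decreases the first term by exactly $1$. It also increases $Q$ by the number of neighbors colored $c$. So each removed color costs at most $1$, and costs nothing if some neighbor of $v$ uses it. Therefore $Q(F_v)\ge Q(F_v^{\mathrm{old}})-\#\{c\in L_v:\ c\text{ unused by }N(v)\}$.

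I would bound $Q(F_v^{\mathrm{old}})$ from below using \Cref{inv:deg} together with $t<t^*$. Minimality of $t^*$ gives $(1+\eta)\Delta(1+\frac1{\log\Delta})^{t}/M^{t}>s^2$. Also $t=O(\log\Delta/\log s)\le O(\sqrt{\log\Delta})$, so the ratio $\big(\frac{1-1/\log\Delta}{1+1/\log\Delta}\big)^t$ is at least a constant close to $1$. Combining these, $\eta\Delta(1-\frac1{\log\Delta})^t/M^t\ge \frac{\eta}{1+\eta}\cdot(1-o(1))\,s^2\ge \frac{\eta}{3}s^2$, using $\eta<1$.

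\emph{Main obstacle.} The hard step is controlling the number of ``junk'' colors in $L_v$, meaning colors output by the frequent-items routine that no neighbor of $v$ actually uses; the target loss is $\ceil{s}$, not $\ceil{s^2}$. My first attempt is to modify the output of \Cref{lem:freq}: keep only list items whose Misra--Gries counter is positive at the end of the pass. Every item that occurs more than $N_v/k$ times necessarily retains a positive counter, so correctness of the first part is unaffected. Each retained item has actually appeared in the sub-stream, so it costs nothing in $Q$, which would make the loss $0\le\ceil{s}$. If the paper's bookkeeping instead forces some unverified items into $L_v$, I would bound their number directly. A color that occurs more than $N_v/k$ times consumes more than $N_v/s^2$ stream entries, so at most $s^2$ such colors exist. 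I would then argue that only $O(s)$ list slots can be empty or stale, since a stale entry must have been inserted and decremented, and charge these against the $\ceil{s}$ slack.
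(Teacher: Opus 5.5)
Your proof is correct, but it handles the pruning step differently from the paper.

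\textbf{The paper's route.} Despite the $k=\ceil{s^2}$ in the algorithm description, the paper's proof argues with a list of size $|L_v|\le\ceil{s}$ and frequency threshold $|C_i\setminus F_v|/s$. In effect it runs the frequent-items routine with parameter $\ceil{s}$. The first bound then reads $(|C_i\setminus F_v|+\ceil{s})/s$. For the second bound, the paper charges at most one unit of slack per pruned color: deleting a color lowers $|C_i\setminus F_v|$ by one and can only lower the neighbor-color count. This is where the $-\ceil{s}$ comes from.

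\textbf{Your route.} You take $k=\ceil{s^2}$ literally. This gives the stronger survivor bound $|C_i\setminus F_v|/s^2+2$, but a list of up to $\ceil{s^2}$ colors. With a list that long, the paper's unit charging would destroy the $\frac{\eta}{3}s^2$ slack. You rescue this by observing that Misra--Gries only ever reports colors that occur in the sub-stream. So each pruned color is used by some neighbor of $v$, and deleting it does not decrease the slack at all.

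\textbf{Comparison.} The paper's route needs only the list-size guarantee of \Cref{lem:freq}, treated as a black box. Yours opens the black box, but it matches the algorithm as written and loses nothing. Your estimate $\brac{\frac{1-1/\log\Delta}{1+1/\log\Delta}}^t=1-o(1)$, from $t=O(\sqrt{\log\Delta})$, also covers every constant $\eta<1$ for large $\Delta$. The paper instead uses the cruder bound $1/e$, which it states only for $\eta<0.1$.

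\textbf{Drop the fallback.} The fallback in your last paragraph is not needed and is not right as stated. For a black-box list, nothing bounds the number of non-occurring colors by $O(s)$. Even in Misra--Gries, the number of non-frequent table entries can be of order $k$ (e.g.\ $k$ distinct colors each occurring once). Such entries are harmless only because they occur in the stream, which is exactly your first argument. Commit to that argument alone.
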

\begin{proof}
    According to \Cref{lem:freq}, $L_v$ contains at most $\ceil{s}$ elements but it includes all elements in $C_i\setminus F_v$ (before merging $L_v$ with $F_v$) such that $|\{u\in N(v)\mid \phi(u) = c\}| > \frac{|C_i\setminus F_v|}{s}$. Therefore, after merging $F_v$ with $L_v$, we have:
    $$|\{u\in N(v)\mid \phi(u) = c\}| \leq \frac{|C_i\setminus F_v| + \ceil{s}}{s} \leq \frac{|C_i\setminus F_v|}{s}+2$$
    Also, by \Cref{inv:deg} and that $|L_v|\leq \ceil{s}$, we have:
    $$\begin{aligned}
        &|C_i\setminus F_v| - \brac{|\{u\in N(v)\mid u\in B_{i}\}| + |\{u\in N(v)\mid \phi(u)\in C_{i}\setminus F_v\}|}\\
        &\geq \eta\Delta\brac{1 - \frac{1}{\log\Delta}}^t/M^t - \ceil{s}\\
        &= \frac{\eta}{1+\eta}\brac{\frac{\log\Delta-1}{\log\Delta+1}}^t\cdot (1+\eta)\Delta\brac{1+\frac{1}{\log\Delta}}^t / M^t - \ceil{s}\\
        &\geq \frac{\eta}{3}\cdot s^2 - \ceil{s}
    \end{aligned}$$
    The last inequality has utilized the fact that $\frac{1}{1+\eta}\brac{\frac{\log\Delta-1}{\log\Delta+1}}^t\geq \frac{1}{(1+\eta)e} > \frac{\eta}{3}$ when $\eta < 0.1$, as well as the fact that $(1+\eta)\Delta\brac{1+\frac{1}{\log\Delta}}^t / M^t > s^2$ when $t<t^*$, according to the definition of $t^*$ in the previous sub-section.
\end{proof}

Next, we show that there always exists a good pair $(h_1^*, h_2^*)$ of hash functions.
\begin{lemma}
    There always exists a good pair $(h_1^*, h_2^*)\in \mathcal{H}_1\times \mathcal{H}_2$ of hash functions under \Cref{def:good}.
\end{lemma}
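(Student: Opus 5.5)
We use the probabilistic method with an averaging argument. Draw $(h_1,h_2)$ uniformly from $\mathcal{H}_1\times\mathcal{H}_2$, independently. We show that for every fixed $v\in B_i$ (with $i\ge 1$), the probability that $v$ is bad is at most $s^{-1/8}$, say. Linearity of expectation then bounds the expected fraction of bad vertices by $s^{-1/8}<s^{-1/9}$, so some pair has bad fraction below $s^{-1/9}$, i.e.\ is good.

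Fix $v\in B_i$. First condition on the value $h_1(v)=j$. Given this conditioning, the variables $h_1(u)$ for $u\neq v$ are still almost $(k_1-1)$-wise independent, with error inflated by a factor of about $M$. This is harmless because $\epsilon_1=(10M^5)^{-10}$ is tiny. Note also that $h_2$ is independent of $h_1$.

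We decompose $\cnt(v,h_1,h_2)$ into three sums and apply \Cref{lem:conc} to each, with indicator events "hash value equals $j$" (relabel $j$ as $1$).

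\textbf{(a) Palette size.} Consider $|C_{i,h_2}^j\setminus F_v|=\sum_{c\in C_i\setminus F_v}\mathbf{1}[h_2(c)=j]$ with unit weights, $N=|C_i\setminus F_v|$. By \Cref{lem:remove-freq} and Invariant~\ref{inv:deg}, $N\ge \frac{\eta}{3}s^2$, which is far larger than $M^{O(1)}=s^{O(1/200)}$. Hence the deviation $\delta N/M$ is small with probability about $(M\sqrt{k}/(\delta\sqrt N))^{k}+\epsilon(2M^2/\delta)^k$. The same argument, applied over all of $C_i$, gives the upper bound $|C_{i,h_2}^j|\le |C_i|(1+\frac{1}{\log\Delta})/M$, yielding the second condition of \Cref{def:good}. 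If $|C_i|$ is already small we simply add zero-weight dummies, or note that \Cref{lem:conc} is stated with respect to $N$.

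\textbf{(b) Same-bin neighbors.} Consider $|\{u\in N(v)\cap B_i: h_1(u)=j\}|$, with unit weights and $N\le |C_i\setminus F_v|$ by the invariant.

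\textbf{(c) Colored neighbors hitting the sub-palette.} Consider $\sum_{c\in C_i\setminus F_v} w_c\mathbf{1}[h_2(c)=j]$, where $w_c=|\{u\in N(v):\phi(u)=c\}|$. This is where the pruning matters: \Cref{lem:remove-freq} gives $w_c\le W:=|C_i\setminus F_v|/s+2$. With $N=|C_i\setminus F_v|$ terms, the allowed deviation $\delta WN/M$ is only a $\delta/s$ fraction of $N^2/(sM)$ scale, and is therefore tiny relative to the slack. Without pruning, $W$ could be as large as $\Delta$ and the bound would fail; this is the main obstacle, and (c) is the step I expect to require the most care.

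Combining the three estimates: the expectation of $\cnt$ is at least $\frac{1}{M}$ times the old slack, and the old slack is at least $\eta\Delta(1-\frac{1}{\log\Delta})^t/M^t-\lceil s\rceil$ by \Cref{lem:remove-freq}. We choose $\delta=1/(10\log\Delta)$, or something like $s^{-1/100}$, noting $\log\Delta\le (\log s)^2/\alpha^2$. Then the total deviation across (a)–(c) is at most $\frac{1}{\log\Delta}$ times $\eta\Delta(1-\frac{1}{\log\Delta})^t/M^{t+1}$, after absorbing the $\lceil s\rceil$ loss using $s\ll s^2$. The result is at least $\eta\Delta(1-\frac{1}{\log\Delta})^{t+1}/M^{t+1}$.

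It remains to check that the failure probabilities are at most $s^{-1/8}$. With $k=10$, the first term of \Cref{lem:conc} is $(M\sqrt{10}/(\delta\sqrt N))^{10}$. Since $N\gtrsim s^2/\mathrm{polylog}$ in (a) and (c), and $M,1/\delta\le s^{1/100}$, this term is at most $s^{-9}$. The second term is $\epsilon(2M^2/\delta)^{10}\le (10M^5)^{-10}(2M^3)^{10}\ll M^{-20}$, which is roughly $s^{-1/10}$. If this is too weak, we tune $\delta\ge 1/M$ and compare against $s^{-1/9}$ with the margin coming from $M^{-20}=s^{-1/10}$ versus the required $s^{-1/9}$.

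For (b), $N$ may be small. In that case we bound it trivially, or pad the sum to size $|C_i\setminus F_v|$ with dummy variables, so that \Cref{lem:conc} still applies with $N=|C_i\setminus F_v|$ and upper-deviation only.

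A union bound over (a)–(c) and the size condition completes the per-vertex bound. The constants ($\alpha=200$, exponents $1/200$, $k=10$) should line up with these estimates, though the $\epsilon$-term comparison against $s^{-1/9}$ is the delicate arithmetic that I would verify last.
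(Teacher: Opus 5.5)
\textbf{Gap in step (c).} Most of your framework is sound: a random pair, a per-vertex failure bound, then averaging. Conditioning on $h_1(v)=j$ in place of the paper's union bound over $j\in[M]$ is also fine, provided you use an even moment such as $k=8$ afterwards. The proof breaks at step (c).

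\Cref{lem:conc} only sees the uniform weight bound $W$ and the number of terms, and its deviation threshold is $\delta WN/M$. With $N=|C_i\setminus F_v|$ and $W=N/s+2\approx N/s$, that threshold is about $\delta N^2/(sM)$. This is not automatically small next to $N/M$. The margin you can afford is about $\frac{\eta}{\log\Delta}\cdot\frac{N}{M}$, so you need $\delta\lesssim\frac{\eta s}{N\log\Delta}$. The first error term then has base
\[
\frac{M\sqrt{k}}{\delta\sqrt{N}}\gtrsim\frac{M\sqrt{k}\,\sqrt{N}\,\log\Delta}{\eta s}=\Omega\brac{\frac{M\log\Delta}{\sqrt{\eta}}}\gg 1,
\]
because $N\geq\frac{\eta}{3}s^2-\ceil{s}$ in every round, and $N$ can be far larger than $s^2$ in early rounds. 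So a single application of the lemma is vacuous in every round. The root cause is that charging every color the maximal frequency $N/s$ ignores $\sum_c w_c\leq N$: the average weight is at most $1$. A variance-sensitive version of \Cref{lem:conc}, depending on $\sum_i w_i^2$ rather than $W^2N$, would repair this, but that is not what the lemma states.

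\textbf{Missing idea: dyadic bucketing by frequency.} This is how the paper handles (c) with \Cref{lem:conc} as stated.
\begin{itemize}
\item Let $D_r$ be the colors of $C_i\setminus F_v$ used by between $2^r$ and $2^{r+1}$ neighbors of $v$, and let $d_r=\sum_{c\in D_r}w_c$. Inside a bucket, $2^{r+1}|D_r|\leq 2d_r$.
\item If $|D_r|>\sqrt{s}$, apply \Cref{lem:conc} with $W=2^{r+1}$, $N=|D_r|$ and $\delta=\gamma/2$. The bucket's sum then deviates by more than $\gamma d_r/M$ with probability only $O(s^{-1/8})$. These errors add up to $\gamma\sum_r d_r/M\leq\gamma N/M$.
\item If $|D_r|\leq\sqrt{s}$, there is no concentration, so bound the bucket deterministically. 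By \Cref{lem:remove-freq}, only $r$ with $2^r\leq N/s+2$ occur. These buckets together contribute at most $\log\Delta\cdot\sqrt{s}\cdot 2(N/s+2)=O(N\log\Delta/\sqrt{s})$. This is negligible against $\eta N/(M\log\Delta)$, since $M\approx s^{1/200}$.
\item A union bound over the $O(\log\Delta)$ buckets completes (c).
\end{itemize}
This is where pruning really enters: it caps the total weight of the sparse heavy classes. It does not serve as $W$ in one global concentration bound.

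\textbf{Two smaller fixes.} First, $\delta$ in (a) and (b) needs a factor of $\eta$, e.g.\ $\delta=\eta/(20\log\Delta)$. Second, $1/\delta$ must stay polylogarithmic. With that choice the $\epsilon$-term is about $s^{-3/20+o(1)}$. Taking $1/\delta=M$ or $s^{1/100}$ gives at least about $s^{-1/10}$, which is larger than $s^{-1/9}$, not smaller.
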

\begin{proof}
    Let us analyze the probability that any single vertex $v\in B_i, \forall 1\leq i\leq l$ is good when $h_1, h_2$ are drawn uniformly at random from $\mathcal{H}_1, \mathcal{H}_2$. To lower bound the value of linear estimator which is $$\psi(v, j, h_1, h_2) \overset{\text{def}}{=} \left|C_{i, h_2}^{j}\setminus F_v\right| - \left|\left\{u\in N(v)\mid u\in B_{i, h_1}^j\right\}\right| - \left|\left\{u\in N(v)\mid \phi(u)\in C_{i, h_2}^{j}\setminus F_v\right\}\right| $$
    We view it as a combination of three sums:
    $$\begin{aligned}
        \psi(v, j, h_1, h_2) &= \left|C_{i, h_2}^{j}\setminus F_v\right| - \left|\left\{u\in N(v)\mid u\in B_{i, h_1}^j\right\}\right| - \left|\left\{u\in N(v)\mid \phi(u)\in C_{i, h_2}^{j}\setminus F_v\right\}\right| \\
        &= \sum_{c\in C_i\setminus F_v}\indic{h_2(c) = j} - \sum_{u\in N(v)\cap B_i}\indic{h_1(u) = j}\\
        &- \sum_{c\in C_i\setminus F_v}\indic{h_2(c) = j}\cdot \left|\{u\in N(v)\mid \phi(u) = c\}\right| 
    \end{aligned}$$
    Then we are going to apply \Cref{lem:conc} to each of the sums respectively.
    \begin{itemize}[leftmargin=*]
        \item For the first summation, by \Cref{lem:remove-freq}, we know that:
        $$|C_i\setminus F_v| \geq \frac{\eta s^2}{3} -\ceil{s} > s^{1/2}$$
        Then, by \Cref{lem:conc}, for parameter $\gamma = \frac{\eta}{20\log\Delta}, k=10$ we have:
        \begin{equation}\begin{aligned}\label{equ:sum1}
            &\Pr_{h_2\sim \mathcal{H}_2}\left[\left|\sum_{c\in C_i\setminus F_v}\indic{h_2(c)=j} - \frac{|C_i\setminus F_v|}{M}\right| \geq \frac{\gamma |C_i\setminus F_v|}{M}\right]\\
            &\leq \brac{\frac{M\sqrt{k}}{\gamma\sqrt{|C_i\setminus F_v|}}}^k + \epsilon (2 M^2 / \gamma)^k\leq \brac{\frac{20M\sqrt{k}\log\Delta}{\eta s^{1/4}}}^k + \brac{\frac{40\log\Delta}{\eta M^3}}^k
        \end{aligned}
        \end{equation}
        By definition $k=10, M = \floor{s^{1/200}}_2$ and $s\geq 2^{200\sqrt{\log\Delta}}$, we have:
        $$\frac{20M\sqrt{k}\log\Delta}{\eta s^{1/4}} \leq \frac{20\sqrt{10}\cdot s^{1/200}\log\Delta}{\eta s^{1/4}} < \frac{1}{s^{1/5}}$$
        $$\frac{40\log\Delta}{\eta M^3} \leq \frac{640\log\Delta}{\eta s^{3/200}} < \frac{1}{s^{1/80}}$$
        Therefore, the right-hand side of \Cref{equ:sum1} is less than:
        $$\brac{\frac{1}{s^{1/5}}}^k + \brac{\frac{1}{s^{1/80}}}^k < 2s^{-1/8}$$

        \item For the second summation, if $|N(u)\cap B_i| \leq s^{1/2}$, then by \Cref{lem:remove-freq} we know that
        $$|N(u)\cap B_i| \leq s^{1/2} < \frac{\eta s}{3} -2\leq |C_i\setminus F_v| /s$$ 
        Otherwise if $|N(u)\cap B_i| > s^{1/2}$, then by \Cref{lem:conc}, for parameter $\gamma = \frac{\eta}{20\log\Delta}$ we have (reusing the same calculation as above):
        \begin{equation}\label{equ:sum2}
        \begin{aligned}
            &\Pr_{h_1\sim \mathcal{H}_1}\left[\left|\sum_{u\in N(v)\cap B_i}\indic{h_1(u)=j} - \frac{|N(v)\cap B_i|}{M}\right| \geq \frac{\gamma |N(v)\cap B_i|}{M}\right]\\
            &\leq \brac{\frac{M\sqrt{k}}{\gamma\sqrt{|N(v)\cap B_i|}}}^k + \epsilon(2 M^2 / \gamma)^k \leq \brac{\frac{20M\sqrt{k}\log\Delta}{\eta s^{1/4}}}^k + \brac{\frac{40\log\Delta}{\eta M^3}}^k \leq 2s^{-1/8}
        \end{aligned}
        \end{equation}

        \item Finally, let us turn to the third summation. Classify all the colors $c\in C_i\setminus F_v$ currently used by some neighbors of $v$ as following. For each non-negative integer $r\geq 0$, define a color subset $$D_r = \left\{c\in C_i\setminus F_v\mid |\{u\in N(v)\mid \phi(u) = c\}|\in [2^r, 2^{r+1})\right\}$$
        According to \Cref{lem:remove-freq}, when $2^r > \frac{|C_i\setminus F_v|}{s}+2$ it must be $D_r = \emptyset$. Additionally, define the following quantity:
        $$d_r = \sum_{c\in D_r}\left|\{u\in N(v)\mid \phi(u)\in D_r\}\right|$$
        So by definition, we have:
        $$\sum_{r\geq 0}d_r = \left|\left\{u\in N(v)\mid \phi(u)\in C_i\setminus F_v\right\}\right|$$
        For each index $r$ such that $|D_r| > \sqrt{s}$, apply \Cref{lem:conc} with parameter $\gamma = \frac{1}{5\log\Delta}$ we have (reusing the same calculation as above):
        \begin{equation}\label{equ:sum3}
            \begin{aligned}
            &\Pr_{h_2\sim \mathcal{H}_2}\left[\left|\sum_{c\in D_r}\indic{h_2(c)=j}\cdot |\{u\in N(v)\mid \phi(u) = c\}| - \frac{d_r}{M}\right| \geq \frac{\gamma/2 \cdot 2^{r+1}|D_r|}{M} > \frac{\gamma d_r}{M}\right]\\
            &\leq \brac{\frac{M\sqrt{k}}{\gamma\sqrt{|D_r|}}}^k + \epsilon(2 M^2 / \gamma)^k\leq \brac{\frac{20M\sqrt{k}\log\Delta}{\eta s^{1/4}}}^k + \brac{\frac{40\log\Delta}{\eta M^3}}^k \leq 2s^{-1/8}
            \end{aligned}
        \end{equation}
        For the rest cases where $|D_r| \leq \sqrt{s}$, we have:
        \begin{equation}\label{equ:sum4}
            \sum_{|D_r|\leq \sqrt{s}} d_r \leq \log\Delta\cdot \sqrt{s}\cdot 2\brac{\frac{|C_i\setminus F_v|}{s}+2} < 3\log\Delta \cdot |C_i\setminus F_v| / \sqrt{s}
        \end{equation}
    \end{itemize}

    Summing up the above three cases, we can conclude (by a union bound) that with probability at least $1 - 6M s^{-1/8}$ over the choice of $(h_1, h_2)$, the linear estimator $\psi(v, j, h_1, h_2)$ is at least (for all $j\in [M]$ ):
    $$\begin{aligned}
        \psi(v, j, h_1, h_2) &\geq \frac{(1-\gamma)|C_i\setminus F_v|}{M} - \brac{\frac{(1+\gamma)|N(v)\cap B_i|}{M} + \frac{|C_i\setminus F_v|}{s}}\\
        &- \brac{\frac{(1+\gamma)\left|\left\{u\in N(v)\mid \phi(u)\in C_i\setminus F_v\right\}\right|}{M} + \frac{3\log\Delta\cdot |C_i\setminus F_v|}{\sqrt{s}}}\\
        &\geq \frac{1}{M}\cdot\brac{|C_i\setminus F_v| - |\{u\in N(v)\mid u\in B_i\}| - |\{u\in N(v)\mid \phi(u)\in C_i\setminus F_v\}|}\\
        &- \frac{\gamma}{M}\cdot \brac{|C_i\setminus F_v| +|\{u\in N(v)\mid u\in B_i\}| + |\{u\in N(v)\mid \phi(u)\in C_i\setminus F_v\}|}\\
        &- \brac{\frac{|C_i\setminus F_v|}{s} + \frac{3\log\Delta\cdot |C_i\setminus F_v|}{\sqrt{s}}}\\
        &\geq \eta\Delta\brac{1 - \frac{1}{\log\Delta}}^t / M^{t+1} - 2\gamma\cdot (1+\eta)\Delta\brac{1 + \frac{1}{\log\Delta}}^t / M^{t+1} \\        
        &-\brac{\frac{M}{s} + \frac{3M\log\Delta}{\sqrt{s}}}\cdot (1+\eta)\Delta\brac{1 + \frac{1}{\log\Delta}}^t / M^{t+1}\\
        &\geq \brac{1 - \frac{(1+\eta)e^2}{10\log\Delta} - \frac{(1+\eta)e^2}{\eta}\brac{s^{-0.9} + s^{-0.4}}}\cdot \eta\Delta\brac{1 - \frac{1}{\log\Delta}}^t / M^{t+1} \\
        &\geq \eta\Delta\brac{1 -\frac{1}{\log\Delta}}^{t+1} / M^{t+1}
    \end{aligned}$$
    Here the first and second inequality is by \Cref{equ:sum1,equ:sum2,equ:sum3,equ:sum4} and union bound, and the third inequality is by \Cref{inv:deg} which holds at the beginning of the $t$-th round.
    
    Finally, let us also upper bound the probability that the second condition of \Cref{def:good} is violated. Again, for any $j\in [M]$, using \Cref{lem:conc} we have:
    $$\begin{aligned}
    	&\Pr_{h_2\sim \mathcal{H}_2}\left[ \left|C_{i, h_2}^{j}\right|> (1+\eta)\Delta\brac{1+\frac{1}{\log\Delta}}^{t+1} / M^{t+1}\right] \\
    	&= \Pr_{h_2\sim \mathcal{H}_2}\left[ \sum_{c\in C_i}\indic{h_2(c) = j}> (1+\eta)\Delta\brac{1+\frac{1}{\log\Delta}}^{t+1} / M^{t+1}\right]\\
    	&\leq \Pr_{h_2\sim \mathcal{H}_2}\left[\left| \sum_{c\in C_i}\indic{h_2(c) = j} - \frac{|C_i|}{M}\right|> \frac{|C_i|}{M\log\Delta}\right]\\
    	&\leq \brac{\frac{M\sqrt{k}\log\Delta}{\sqrt{|C_i|}}}^k + \epsilon(2M^2\log\Delta)^k \leq 2s^{-1/8}
    \end{aligned}$$
    Here the first inequality has used the fact that \Cref{inv:deg} holds in the previous round, and the last inequality holds because it is even smaller compared to right-hand side of \Cref{equ:sum1}.
    
    By a union bound over all $j\in [M]$, we can conclude that $v$ is good under \Cref{def:good} with probability at least $1 - 8M s^{-1/8} > 1 - s^{-1/9}$ when $(h_1, h_2)$ is drawn uniformly at random from $\mathcal{H}_1\times\mathcal{H}_2$. By linearity of expectation and according to \Cref{def:good}, there must be a good pair $(h_1^*, h_2^*)\in \mathcal{H}_1\times\mathcal{H}_2$ which concludes the proof.    
\end{proof}

Now that we have shown that a good pair $(h_1^*, h_2^*)$ always exists, the algorithm would use it to partition the color and vertex sets further while still preserving \Cref{inv:deg}. By the algorithm description, the number of rounds is at most $O\brac{\frac{\log\Delta}{\log s}}$, so the total number of vertices in $B_{\text{bad}}$ is at most $$|B_{\text{bad}}|\leq O\brac{\frac{\log\Delta}{\log s}}\cdot s^{-1/9}\cdot \left|\phi^{-1}(\bot)\right| \leq s^{-1/10}\cdot \left|\phi^{-1}(\bot)\right|$$
using the fact that $s\geq 2^{\alpha\sqrt{\log\Delta}}$ with $\alpha=200$ being a large constant. By the algorithm description, all vertices in $\phi^{-1}(\bot)\setminus B_{\text{bad}}$ would be colored under the color extension $\phi'$, so this concludes the proof of \Cref{lemma:extend}.
\section{Open Problems}
Our result for streaming vertex coloring initiates several interesting questions regarding deterministic streaming algorithms.

\paragraph*{Optimal Pass Complexity.} The current number of passes $O(\sqrt{\log\Delta})$ for $O(\Delta)$-coloring seems to be a barrier against our approach. New ideas might be needed to achieve faster pass efficiency, say $O(\log\log\Delta)$ passes.

\paragraph*{$(\Delta+1)$-Coloring.} Bypassing the logarithmic bound $O(\log\Delta)$ in the case of $(\Delta+1)$-coloring looks significantly more challenging than $O(\Delta)$-coloring. Both Step $1$ and Step $2$ break down in the proof of \Cref{theorem:main}: firstly, it is not known how to reduce the number of uncolored vertices by half within $O(1)$ passes; secondly, the approach of pruning frequent neighbor colors also does not work with only $\Delta+1$ colors.

\paragraph*{Maximal Independent Set.} As a closely related problem to $(\Delta+1)$-coloring, maximal independent set is also a locally checkable problem, but it is usually harder than $(\Delta+1)$-coloring in various computation models (for example, in the dynamic setting, $(\Delta+1)$-coloring seems easier than maximal independent set \cite{10.1145/3501403,DBLP:journals/talg/BhattacharyaGKL22,DBLP:conf/focs/BehnezhadDHSS19,DBLP:conf/focs/ChechikZ19}), and arguably studied more widely than $(\Delta+1)$-coloring. In the streaming setting, the pass complexity of randomized algorithms has been settled recently in \cite{10.1145/3618260.3649763} which is $\Theta(\log\log n)$, while the current upper bound for deterministic algorithms is $O(\log n)$ by derandomizing Luby's algorithm \cite{Luby1985ASP}. Breaking the logarithmic bound for deterministic maximal independent set would raise a fundamental question.

\section*{Acknowledgment}
Shiri Chechik is supported by the European Research Council (ERC)
under the European Union’s Horizon 2020 Research and Innovation program, grant agreement
No. 803118 ``The Power of Randomization in Uncertain Environments''. Hongyi Chen and Tianyi Zhang are supported by the Fundamental and Interdisciplinary Disciplines Breakthrough Plan of the Ministry of Education of China (No. JYB2025XDXM118) and the ``111 Center'' (No. B26023).

\vspace{5mm}
\bibliographystyle{alpha}
\bibliography{ref-dblp}

\appendix

\section{Proof of \Cref{lemma:prep}}

\extendcolor*
\iffalse
\begin{lemma}[restate \Cref{lemma:prep}]
Let $G$ be an $n$-vertex graph with maximum degree $\Delta$, presented in an insertion-only stream, and let $\phi$ be a partial coloring using at most $(1+\eta)\Delta$ colors.  There exists a deterministic algorithm that, using $O(1)$ passes and $\tilde{O}(n)$ bits of space, extends $\phi$ to a new coloring $\phi'$ such that 
$|\phi'^{-1}(\perp)| < \brac{1 - \Omega(\eta)} |\phi^{-1}(\perp)|$; in other words, the number of uncolored vertices drops by a $1 - \Omega(\eta)$ factor.
\end{lemma}
\fi
%\hongyi{The space is $\Theta(n)$ instead of $\Theta(n / \eta)?$}
\begin{proof}
    For a comparison, in \cite{assadi2022deterministic}, the authors used near-universal hash functions and the algorithm runs in two passes over the data stream, whereas our algorithm will use universal hash functions \cite{carter1977universal} and takes $O(1)$ passes. By the prime number theorem \cite{Grosswald1984}, there exists a prime $p\in \left[(1+\eta/2)\Delta, (1+\eta)\Delta\right]$ when $\eta$ is a constant and $\Delta$ is non-constant.
    % \hongyi{Bertrand’s postulate only guarantees there exists $p \in \left[(1+\eta/2)\Delta, 2(1+\eta/2)\Delta\right]$, why such $p$ can always be found? And it seems that $(1+\eta)\Delta$ is not used in the later proofs.} \tianyi{Use prime number theorem \cite{Grosswald1984}} 
    Define $d = \ceil{\log_p n}$, and consider any $d$-dimensional vector $\overrightarrow{x}\in [p]^d$ and $b\in [p]$. For any vertex $v\in V$, as $p^d \geq n$, we can encode it as a unique vector $\overrightarrow{v}\in [p]^d$. Consider the hash function $$h_{\overrightarrow{x}, b}(v) = \overrightarrow{x}\cdot \overrightarrow{v} + b\mod p$$
    For any hash function $h_{\overrightarrow{x}, b}$, as in \cite{assadi2022deterministic}, edge $(u, v)\in E$ is called monochromatic (under the choice of $h_{\overrightarrow{x}, b}$) if (1) $\phi(u) = \phi(v)= \bot$ and $h_{\overrightarrow{x}, b}(u) = h_{\overrightarrow{x}, b}(v)$, or (2) $\phi(v) = \bot$ and $h_{\overrightarrow{x}, b}(v) = \phi(u)$. Clearly, when $\overrightarrow{x}, b$ are chosen uniformly at random, the probability that any edge $(u, v)$ being monochromatic is exactly $1/p$. Therefore, by linearity of expectation, there exists a pair $(\overrightarrow{x}_*, b_*)$ whose number of monochromatic edges is at most $\frac{\Delta\cdot |\phi^{-1}(\bot)|}{p}\leq \frac{2}{2+\eta}\cdot |\phi^{-1}(\bot)|$. Define an extension $\phi'$ of $\phi$ as:
    $$\phi'(v) = \begin{cases}
        \phi(v) &   \phi(v)\neq \bot\\
        h_{\overrightarrow{x}, b}(v)    &   \phi(v) = \bot \text{ and $v$ is not incident on any monochromatic edges}\\
        \bot    &   \text{else}
    \end{cases}$$
    Then $\phi'$ is still a valid partial coloring and $|\phi'^{-1}(\perp)| < \frac{2}{2+\eta}\cdot |\phi^{-1}(\perp)|$.

    To find the pair $(\overrightarrow{x}_*, b_*)$ in $O(1)$ passes over the data stream, we are using the standard method of conditional expectation as in \cite{assadi2023coloring,czumaj2020simple}. That is, rewriting each possible $(\overrightarrow{x}, b)$ as a vector $(y_1, y_2, \ldots, y_N)\in [p]^N, N = \ceil{\log_p n}+1$, in each pass over the data stream, we are going to fix $\floor{\log_p n}$ more entries in the vector $(y_1, y_2, \ldots, y_N)$. So the total number passes required is $O(1)$. 
    % \hongyi{Maybe $p > n$?} \tianyi{No, $p$ is less than palette size}

    Assume we have already fixed a prefix $y_1^*, y_2^*, \ldots, y_l^*$, such that the expected number of monochromatic edges of hash function $h_{y_1^*, y_2^*, \ldots, y_{l}^*, y_{l+1}, \ldots, y_N}$ when $y_{l+1}, \ldots, y_N$ are chosen uniformly at random from $[p]$, is bounded by $\frac{\Delta\cdot |\phi^{-1}(\bot)|}{p}$. To fix the next (at most) $\floor{\log_p n}$ entries $y_{l+1}^*, y_{l+2}^*, \ldots, y_{\max\{l+\floor{\log_p n}^*, N\}}$, for each possible choice $\overrightarrow{z}$ for $y_{l+1}^*, y_{l+2}^*, \ldots, y_{\max\{l+\floor{\log_p n}^*, N\}}$, the algorithm computes the expected number $M_{\overrightarrow{z}}$ of monochromatic edges of $$h_{y_1^*, y_2^*, \ldots, y_l^*, \overrightarrow{z}, y_{l+1+\floor{\log_p n}}, \ldots, y_N}$$ where $y_{l+1+\floor{\log_p n}}, \ldots, y_N$ is chosen uniformly at random from $[p]$. All values $M_{\overrightarrow{z}}$ can be computed in one pass using $\tilde{O}(n)$ bits of memory, by maintaining a sum $$\sum_{(u, v)\in E}\Pr\left[(u, v)\text{ is monochromatic under }h_{y_1^*, y_2^*, \ldots, y_l^*, \overrightarrow{z}, y_{l+1+\floor{\log_p n}}, \ldots, y_N}\right]$$ for each different $\overrightarrow{z}$.
\end{proof}

\section{Proof of \Cref{lem:hash}}
For almost $k$-wise independence, the original paper \cite{alon1992simple} only studied hash function constructions taking binary values. Their main result can be summarized as follows.
\begin{lemma}[original statement from \cite{alon1992simple}]\label{lem:alon}
    For any tuple of $(k, \epsilon, N)$, there exists a $(k, \epsilon)$-wise independent hash family $\mathcal{H}_{k, \epsilon, N}$ whose domain and range are $[N]$ and $\{0, 1\}$, and every function in $\mathcal{H}_{k, \epsilon, N}$ can be described using at most $(2+o(1))\cdot\brac{\log\log N + k/2+ \log k + \log\frac{1}{\epsilon}}$ bits.
\end{lemma}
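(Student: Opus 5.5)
The plan is to follow the Alon--Goldreich--H{\aa}stad--Peralta route: build a small sample space that fools all parity tests on at most $k$ coordinates, then convert small bias into almost $k$-wise independence with the Vazirani XOR lemma. Concretely, I aim for a distribution on $\{0,1\}^N$ that is \emph{$k$-wise $\delta$-biased}. This means that for every nonempty $S\subseteq[N]$ with $|S|\le k$ we have $\bigl|\Pr[\bigoplus_{i\in S}X_i=0]-\tfrac12\bigr|\le\delta$. I will choose $\delta=\epsilon\cdot 2^{-k/2}$.

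\emph{Step 1 (reduce $k$-wise bias to full bias on a short string).} I take the parity-check matrix $H$ of a binary BCH code of designed distance $k+1$ and length $N$. It has $N$ columns $a_1,\dots,a_N\in\{0,1\}^{m'}$ with $m'=\lceil k/2\rceil\lceil\log(N+1)\rceil\le (k/2+1)\log N$, and any $k$ of its columns are linearly independent. Given a string $z\in\{0,1\}^{m'}$ drawn from a $\delta$-biased space on all $m'$ coordinates, I set $X_i=\langle a_i,z\rangle \bmod 2$. For $|S|\le k$, the parity $\bigoplus_{i\in S}X_i$ equals $\langle\sum_{i\in S}a_i,z\rangle$. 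Linear independence makes the vector $\sum_{i\in S}a_i$ nonzero, so the bias is at most $\delta$.

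\emph{Step 2 (a $\delta$-biased space on $m'$ bits with a short seed).} I use the AGHP \emph{powering} construction. Pick $x,y\in\mathbb{F}_{2^m}$ and let the $j$-th bit be $\langle \mathrm{bin}(x^j),\mathrm{bin}(y)\rangle$. For a nonzero test $T$, the parity of the output equals $\langle \mathrm{bin}(p_T(x)),\mathrm{bin}(y)\rangle$, where $p_T$ is a nonzero polynomial of degree $<m'$. If $p_T(x)\ne0$, this parity is uniform over the choice of $y$. Since $p_T$ has fewer than $m'$ roots, the bias is at most $m'/2^m$. Taking $m=\lceil\log(m'/\delta)\rceil$ gives a seed of $2m$ bits. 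Substituting the values of $m'$ and $\delta$, the seed length is
\[2\bigl(\log\log N+\log k+k/2+\log\tfrac1\epsilon+O(1)\bigr),\]
which is the claimed $(2+o(1))(\log\log N+k/2+\log k+\log\frac1\epsilon)$ bits.

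\emph{Step 3 (bias to almost independence).} Fix $k$ indices and a target pattern $y\in\{0,1\}^k$. I expand the indicator of $\{X_{i_j}=y_j\ \forall j\}$ in the Fourier/character basis:
\[\Pr[X_{i_j}=y_j\ \forall j]=2^{-k}\sum_{T\subseteq[k]}\mathbb{E}\bigl[(-1)^{\sum_{j\in T}(X_{i_j}+y_j)}\bigr].\]
The $T=\emptyset$ term contributes exactly $2^{-k}$. Every other term has absolute value at most $2\delta$, since each character expectation equals twice the bias. The crude bound therefore gives a deviation of at most $2\delta$. Parseval/Cauchy--Schwarz (the Vazirani XOR lemma) gives the sharper $L_1$ bound $2^{k/2}\cdot 2\delta$, which also bounds every pointwise deviation. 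With $\delta=\epsilon 2^{-k/2}$ this is $O(\epsilon)$, and rescaling $\epsilon$ by a constant only changes the additive $O(1)$ in the seed length.

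The hash family is indexed by the seed $(x,y)$, with $h(i)=X_i$. Steps 1 and 3 are routine linear algebra and Fourier analysis. The step I expect to need the most care is the parameter bookkeeping: arranging the BCH dimension to be $(k/2)\log N$ rather than $k\log N$, since it is the binary BCH code that supplies the factor $1/2$. I also need to confirm that the field arithmetic in $\mathbb{F}_{2^m}$ yields exactly the $2+o(1)$ leading constant.
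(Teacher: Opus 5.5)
Your proposal is correct: it is the Alon--Goldreich--H{\aa}stad--Peralta argument (BCH reduction of $k$-wise bias to full bias on $m'=\lceil k/2\rceil\lceil\log(N+1)\rceil$ bits, the powering construction, then a Fourier conversion of small bias into near-uniformity), which the paper cites without giving a proof of its own. There are two harmless slips: under the paper's max-norm definition the crude pointwise bound $2\delta$ is already stronger than the Vazirani bound $2^{k/2}\cdot 2\delta$ (the latter is sharper only in $L_1$, which is why the $k/2$ term appears in AGHP's statistical-distance version, so $\delta=\epsilon/2$ would even suffice here), and your estimate $m'\le(k/2+1)\log N$ fails once $k$ exceeds roughly $\log N$, although all you actually use is $\log m'\le\log k+\log\log N+O(1)$.
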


\iffalse
\begin{lemma}[restate \cite{alon1992simple}]
    For any tuple of $(k, \epsilon, N, M)$ where $M$ is an integer power of $2$ and $k\geq 4$, there exists a $(k, \epsilon)$-wise independent hash family $\mathcal{H}_{k, \epsilon, N, M}$ of size at most $|\mathcal{H}_{k, \epsilon, N, M}|\leq O\brac{\brac{\frac{1}{\epsilon}}^3M^{3k}\log^3N}$.
\end{lemma}
\fi

\almostkwise*
\begin{proof}
    Since $\log M$ is an integer, we can apply \Cref{lem:alon} with parameter $N' = N\log M, k' = k\log M$, we obtain a $(k', \epsilon)$-wise independent hash family $\mathcal{H}'_{k', \epsilon, N'}$ with binary range and size at most $\brac{\frac{1}{\epsilon}}^3M^{3k}\log^3N'$. For each hash function $f\in \mathcal{H}'_{k', \epsilon, N'}$, concatenate the outputs of every consecutive $\log M$ inputs, it gives a binary string of length $\log M$ which corresponds to a value in $[M]$, and so it can be viewed as a function mapping from $[N]$ to $[M]$.
\end{proof}

\section{Proof of \Cref{lem:conc}}
The proof requires a moment estimation for (ordinary) $k$-wise independent variables. 
\begin{lemma}[\cite{skorski2022tight}]\label{lem:kwise-conc}
    Consider a sequence of random variables $X_1, X_2, \ldots, X_N$ satisfying the following conditions.
    \begin{itemize}
        \item $\{X_i\}_{1\leq i\leq N}$ are $k$-wise independent and $|X_i - \mathbb{E}[X_i]|\leq 1$.
        \item $\sum_{i=1}^N \mathsf{Var}[X_i]\leq N\sigma^2$.
    \end{itemize}
    Then, for $S = \sum_{i=1}^N X_i$ and any positive even integer $d\leq k$ we have:
    $$\mathbb{E}\left[\brac{S - \mathbb{E}[S]}^d\right]\leq 
    \begin{cases}
        (dN)^{d/2}\sigma^d  &   \log(d/N\sigma^2) < \max\{d/N, 2\}\\
        \brac{\frac{d}{\log(d/N\sigma^2)}}^d    &    \max\{d/N, 2\} \leq \log(d/N\sigma^2) \leq d\\
        N\sigma^2   &   d<\log(d/N\sigma^2)
    \end{cases}$$
\end{lemma}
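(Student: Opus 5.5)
The plan is to reduce to the fully independent case and then bound the $d$-th central moment by a combinatorial sum over set partitions. Each of the three regimes in \Cref{lem:kwise-conc} then corresponds to which block-count $j$ dominates that sum.

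\textbf{Reduction to independence.} Put $Y_i = X_i - \mathbb{E}[X_i]$ and $V = N\sigma^2 \geq \sum_i \mathsf{Var}[X_i] = \sum_i \mathbb{E}[Y_i^2]$. Expanding $(\sum_i Y_i)^d$ gives a sum of products $\prod_{t=1}^d Y_{i_t}$, and each such product involves at most $d\leq k$ distinct indices. Its expectation therefore depends only on the joint law of at most $k$ of the $X_i$'s. By $k$-wise independence this expectation equals the one computed for fully independent copies $X_i'$ with the same marginals. So we may assume the $Y_i$ are mutually independent, mean zero, satisfy $|Y_i|\leq 1$, and have $\sum_i \mathbb{E}[Y_i^2]\leq V$.

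\textbf{Partition expansion.} For a tuple $(i_1,\dots,i_d)$, group the positions $[d]$ according to equal indices; this gives a set partition $\pi$ of $[d]$. If some block of $\pi$ is a singleton, the term vanishes by independence and mean zero. For a block of size $m\geq 2$ carrying index $i$, we use $|\mathbb{E}[Y_i^m]|\leq \mathbb{E}[Y_i^2] = v_i$, which holds because $|Y_i|\leq 1$. Summing over distinct index assignments and then dropping the distinctness constraint yields
$$\mathbb{E}\left[\brac{S-\mathbb{E}[S]}^d\right]\leq \sum_{j=1}^{d/2} P(d,j)\, V^j,$$
where $P(d,j)$ counts partitions of $[d]$ into $j$ blocks, each of size at least $2$.

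\textbf{Bounding the sum in the three regimes.} We compare term by term with Touchard-type sums.
\begin{itemize}
\item \emph{Large-$V$ regime}, $\log(d/V) < \max\{d/N,2\}$. Here $V$ is at least of order $d$, and the top terms $j\approx d/2$ dominate. The counting bound $P(d,j)\leq \binom{d}{2j}(2j-1)!!\, j^{d-2j}$ gives $P(d,j)V^j\leq d^{d/2}V^{d/2}\cdot(\text{geometric factor in } d/V)$. The geometric factors should sum to at most $1$, which gives $(dV)^{d/2} = (dN)^{d/2}\sigma^d$.
\item \emph{Middle regime}. We use $\sum_j S(d,j)V^j = \mathbb{E}[\mathrm{Pois}(V)^d]$ together with the standard estimate $\mathbb{E}[\mathrm{Pois}(V)^d]\leq \brac{d/\log(1+d/V)}^d$. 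Optimizing over $j$ places the maximizer at $j\approx d/\log(d/V)$, which yields $\brac{d/\log(d/V)}^d$.
\item \emph{Tiny-$V$ regime}, $d<\log(d/V)$, i.e.\ $V< d e^{-d}$. The $j=1$ term equals $V$ (since $P(d,1)=1$). For $j\geq 2$ we have $P(d,j)V^j\leq d^{d}V\cdot (de^{-d})^{j-1}$, and this is negligible. The total is then at most $V = N\sigma^2$, possibly after absorbing a constant.
\end{itemize}

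\textbf{Main obstacle.} The difficulty is entirely in the constants. The statement claims the clean forms $(dN)^{d/2}\sigma^d$, $\brac{d/\log(d/N\sigma^2)}^d$ and exactly $N\sigma^2$, with no hidden constant factors. Crude counting of $P(d,j)$ loses factors like $e^d$ or $2^d$. To avoid this, I would first try to get exact generating-function control: write $\sum_j P(d,j)V^j = d!\,[z^d]\exp\brac{V(e^z-1-z)}$ and apply a saddle-point or Cauchy bound at a radius $z$ chosen per regime ($z\approx\sqrt{d/V}$, $z\approx\log(d/V)$, $z\approx d$). That should match the thresholds $\max\{d/N,2\}$ and $d$ appearing in the statement. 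If exact constants still fail, I would fall back on the tight analysis of \cite{skorski2022tight}, whose case split this is.
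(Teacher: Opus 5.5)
Your reduction to full independence and the partition expansion are correct. The majorant you reach, $\sum_j P(d,j)V^j$ with $V=N\sigma^2$, is exactly $\mathbb{E}[(\mathrm{Pois}(V)-V)^d]$. The paper gives no argument for this lemma; it only cites \cite{skorski2022tight}.

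Your middle case closes with no constant loss. Since $P(d,j)\le S(d,j)$, the raw-moment bound you quote gives $\mathbb{E}[(S-\mathbb{E}S)^d]\le \brac{d/\ln(1+d/V)}^d\le \brac{d/\ln(d/V)}^d$ (natural logarithms).

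The other two cases do not go through, and not merely because of constants.
\begin{itemize}
\item \emph{First case.} You take $V$ to be of order $d$, but that holds only when $N\ge d/2$. For $N<d/2$ the condition reads $\log(d/V)<d/N$, which admits $V$ exponentially small in $d/N$. Your majorant no longer depends on $N$, because you dropped the distinctness constraint. Its $j=1$ term alone equals $V$, which exceeds $(dV)^{d/2}$ once $V<d^{-2}$ and $d\ge 4$. So no estimate of $\sum_j P(d,j)V^j$ can give the first bound there. Even for $N\ge d/2$, your counting bound on $P(d,j)$ overshoots $(dV)^{d/2}$ by a factor $e^{\Omega(d)}$ near $\ln(d/V)=2$. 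So the claim that the geometric factors sum to at most $1$ is false, and the Cauchy-estimate refinement is genuinely needed.
\item \emph{Third case.} Your majorant equals $V+\sum_{j\ge 2}P(d,j)V^j>V$, so the exact value $N\sigma^2$ cannot come out of it.
\end{itemize}

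These failures come from the statement, not your method: the display as written is false in the first and third cases.
\begin{itemize}
\item \emph{First case.} Take $N=1$ and $X_1\sim\mathrm{Bernoulli}(p)$ with $p=d^{-2}$ and even $d\ge 12$. Then $\log(d/\sigma^2)\approx 3\log d<d=\max\{d/N,2\}$, so you are in the first case. The claimed bound is at most $(dp)^{d/2}=d^{-d/2}$. Yet $\mathbb{E}[(X_1-p)^d]\ge p(1-p)^d\ge d^{-2}(1-d^{-1})$. The gap is superexponential, so no $C^d$ factor repairs it.
\item \emph{Third case.} A sum of $N\ge 3$ i.i.d.\ Bernoulli$(p)$ variables has fourth central moment $N\sigma^2\brac{1+3(N-2)\sigma^2}>N\sigma^2$, where $\sigma^2=p(1-p)$. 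Small $p$ places $d=4$ in the third case.
\end{itemize}
So your hedge ``possibly after absorbing a constant'' is unavoidable, and deferring to the citation cannot deliver this exact form either.

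What is true, and is the only instance the paper uses (in the proof of \Cref{lem:conc}, with $\sigma=1$), is $\mathbb{E}[(S-\mathbb{E}S)^d]\le (dN)^{d/2}$ whenever $\sum_i\mathsf{Var}[X_i]\le N$. Your framework proves this. For $N\le d$, use $|S-\mathbb{E}S|\le N$ and $N^d\le (dN)^{d/2}$. For $N\ge d$, write $\mu_d(V)=\sum_jP(d,j)V^j$ and proceed in three steps:
\begin{itemize}
\item $\mu_d$ is increasing in $V$, so $\mu_d(V)\le\mu_d(N)$.
\item $\mu_d(N)\le (N/d)^{d/2}\mu_d(d)$, since $j\le d/2$ and $N/d\ge 1$.
\item The Cauchy bound $\mu_d(d)\le d!\,e^{d(e^r-1-r)}r^{-d}$ at $r=0.8$ gives $\mu_d(d)\le d^d$ for $d\ge 6$; check $d\le 4$ directly.
\end{itemize}
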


\iffalse
\begin{lemma}[restate \Cref{lem:conc}]
    Consider a sequence of random variables $X_1, X_2, \ldots, X_N\in [M]$ which is $(k, \epsilon)$-wise independent for some even integer $k$, along with a sequence of non-negative weights $w_1, w_2, \ldots, w_N\leq W$. Define random variables $Y_i = \mathbf{1}[X_i = 1]$. Then for any $\delta$, we have:
    $$\Pr\left[\left|\sum_{i=1}^N w_i\cdot Y_i - \frac{1}{M}\sum_{i=1}^N w_i\right| > \frac{\delta WN}{M}\right]\leq \brac{\frac{M\sqrt{k}}{\delta \sqrt{N}}}^k + \epsilon(2 M^2/ \delta)^k$$
\end{lemma}
\fi
\almostkwiseconc*
\begin{proof}
    By Markov's inequality, we have:
    \begin{equation}\label{equ:markov}
    \begin{aligned}
        \Pr\left[\left|\sum_{i=1}^N w_i\cdot Y_i - \frac{1}{M}\sum_{i=1}^N w_i\right| > \frac{\delta WN}{M}\right] &= \Pr\left[\brac{\sum_{i=1}^N w_i\cdot Y_i - \frac{1}{M}\sum_{i=1}^N w_i}^k > \frac{(\delta WN)^k}{M^k}\right]\\
        &\leq \frac{\mathbb{E}\left[\brac{\sum_{i=1}^N \frac{w_i}{W}\cdot(Y_i - 1/M)}^k\right]}{\frac{(\delta N)^k}{M^k}}
    \end{aligned}
    \end{equation}
    Here we have used the fact that $k$ is an even integer. For notational convenience, define $Z_i = \frac{w_i}{W}\cdot (Y_i - 1/M)$. To estimate the enumerator, expand the $k$-th power $\brac{\sum_{i=1}^N Z_i}^k$ as a summation of terms $\Pi_{j=1}^l Z_{i_j}^{n_j}$ where $\sum_{j=1}^l n_j = k, n_j\geq 1$. Next, we study the expectation of each such term $\Pi_{j=1}^k Z_{i_j}^{n_j}$. Let us first prove the following claim.
    % \begin{claim}
    %     For any fixed binary vector $b_1, b_2, \ldots, b_l\in \{0, 1\}$, define $\phi(b) = \begin{cases}
    %         1-\frac{1}{M}   &   b = 0\\
    %         \frac{1}{M}     &   b = 1
    %     \end{cases}$. Then, we have:
    %     $$\left|\Pr\left[Y_{i_j} = b_j, \forall 1\leq j\leq l\right] - \prod_{j=1}^l \phi(b_j)\right| \leq \epsilon\cdot M^k$$
    % \end{claim}
    \begin{claim}
    Let $\phi : \{0,1\} \to \mathbb{R}$ be defined as $\phi(b) = \begin{cases}
                1-\frac{1}{M}   &   b = 0\\
                \frac{1}{M}     &   b = 1
    \end{cases}$.
    For any $l \le k$ and any distinct indices $i_1, \dots, i_l$, and any fixed vector $(b_1, \dots, b_l) \in \{0,1\}^l$, we have
    $$\left|\Pr\left[Y_{i_j} = b_j, \forall 1\leq j\leq l\right] - \prod_{j=1}^l \phi(b_j)\right| \leq \epsilon\cdot M^k.$$
    \end{claim}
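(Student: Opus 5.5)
We want to bound the deviation of $\Pr[Y_{i_j}=b_j\ \forall j]$ from the product $\prod_j \phi(b_j)$. The plan is to write the event $\{Y_{i_j}=b_j,\ \forall j\}$ as a disjoint union of atomic events on the values of $X_{i_1},\dots,X_{i_l}$. Let $J_1=\{j : b_j=1\}$ and $J_0=\{j: b_j=0\}$. The event is exactly
$$\bigcup_{(y_j)_{j\in J_0}\in([M]\setminus\{1\})^{J_0}} \{X_{i_j}=1\ \forall j\in J_1,\ X_{i_j}=y_j\ \forall j\in J_0\}.$$
This is a disjoint union of at most $(M-1)^{|J_0|}\le M^{l}$ atomic events, each fixing all $l$ coordinates.

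First, we reduce from $l$ coordinates to $k$ coordinates. Since $l\le k$ and the definition of $(k,\epsilon)$-wise independence refers to exactly $k$ indices, we pad with $k-l$ further distinct indices (assuming $N\ge k$; otherwise the claim is vacuous or we work with $\min$). We then sum over the $M^{k-l}$ values of the padded coordinates. Each atomic event on the $l$ original coordinates then becomes a disjoint union of $M^{k-l}$ events, each fixing all $k$ coordinates. Each such $k$-coordinate event has probability within $\epsilon$ of $M^{-k}$. Summing, every $l$-coordinate atomic event has probability within $M^{k-l}\epsilon$ of $M^{k-l}\cdot M^{-k}=M^{-l}$.

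Next, we sum over the $(M-1)^{|J_0|}$ atomic events that make up our target event. The ideal total is
$$(M-1)^{|J_0|}M^{-l}=\prod_{j\in J_0}\Bigl(1-\frac1M\Bigr)\prod_{j\in J_1}\frac1M=\prod_{j=1}^l\phi(b_j).$$
The accumulated error is at most $(M-1)^{|J_0|}\cdot M^{k-l}\epsilon\le M^{l}\cdot M^{k-l}\epsilon=\epsilon M^k$, which is the claimed bound.

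The only mildly delicate point is the padding step, i.e.\ reducing $l$-wise marginals to the $k$-wise definition, where one must count the error in terms of the total number $M^k$ of full assignments rather than compounding errors. Everything else is bookkeeping, and the final count of at most $M^k$ atomic $k$-coordinate events, each off by at most $\epsilon$, gives the stated $\epsilon M^k$ directly.
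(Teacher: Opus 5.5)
Your proposal is correct and follows essentially the same route as the paper: the paper first derives $\left|\Pr[X_{i_j}=c_j,\ \forall j]-M^{-l}\right|\le \epsilon M^{k-l}$ by summing the $k$-wise condition over the $M^{k-l}$ completions (your padding step). It then sums over the set $T$ of $l$-tuples with $c_j=1$ iff $b_j=1$, of size $|T|=M^l\prod_j\phi(b_j)=(M-1)^{|J_0|}$, to get total error at most $|T|\,\epsilon M^{k-l}\le \epsilon M^k$, exactly as you do (and, like you, it implicitly assumes $N\ge k$ so that padding is possible).
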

        
    \begin{proof}[Proof of claim]
        Using triangle inequality and by the definition of $(k, \epsilon)$-wise independence, for any integer sequence $c_1, c_2, \ldots, c_l\in [M]$, we have:
        \begin{equation}\label{equ:almost-k-wise}
            \left|\Pr\left[X_{i_j} = c_j, \forall 1\leq j\leq l\right] - 1/M^l\right|\leq \epsilon M^{k-l}
        \end{equation}
        Define a set of tuples 
        $T = \left\{(c_1, c_2, \ldots, c_l)\in [M]^l\mid c_j = 1 \text{ iff } b_j = 1\right\}$. so, $|T| = M^l\prod_{j=1}^l \phi(b_j)$. Taking a summation of \Cref{equ:almost-k-wise} for all tuples $(c_1, c_2, \ldots, c_l)\in T$ and using triangle inequality, we have:
        $$\begin{aligned}
            \left|\Pr\left[Y_{i_j} = b_j, \forall 1\leq j\leq l\right] - \prod_{j=1}^l \phi(b_j)\right|&= \left|\sum_{(c_1, c_2, \ldots, c_l)\in T}\Pr\left[X_{i_j} = c_j, \forall 1\leq j\leq l\right] - |T| / M^l\right|\\
            &\leq |T|\cdot \epsilon M^{k-l} \leq \epsilon M^k
        \end{aligned}$$
    \end{proof}
    To bound the expectation of any term $\Pi_{j=1}^k Z_{i_j}^{n_j}$, we expand it by the definition of expectation:
    $$\begin{aligned}
        \mathbb{E}\left[\Pi_{j=1}^k Z_{i_j}^{n_j}\right] &= \sum_{(b_1, b_2, \ldots, b_l)\in \{0, 1\}^l}\Pr[Y_{i_j} = b_j, \forall 1\leq j\leq l]\cdot \prod_{j=1}^l\brac{\frac{w_{i_j}}{W}\cdot (b_j - 1/M)}^{n_j}\\
        &= \sum_{(b_1, b_2, \ldots, b_l)\in \{0, 1\}^l}\prod_{j=1}^l\phi(b_j)\cdot \brac{\frac{w_{i_j}}{W}\cdot (b_j - 1/M)}^{n_j}\\
        &+ \sum_{(b_1, b_2, \ldots, b_l)\in \{0, 1\}^l}\brac{\Pr[Y_{i_j} = b_j, \forall 1\leq j\leq l] - \prod_{j=1}^l\phi(b_j)}\cdot \prod_{j=1}^l\brac{\frac{w_{i_j}}{W}\cdot (b_j - 1/M)}^{n_j}\\
        &\leq \sum_{(b_1, b_2, \ldots, b_l)\in \{0, 1\}^l}\prod_{j=1}^l\phi(b_j)\cdot \brac{\frac{w_{i_j}}{W}\cdot (b_j - 1/M)}^{n_j}\\
        &+ \sum_{(b_1, b_2, \ldots, b_l)\in \{0, 1\}^l} \epsilon M^k\prod_{j=1}^l\brac{\frac{w_{i_j}}{W}\cdot (b_j - 1/M)}^{n_j}\\
        &\leq \sum_{(b_1, b_2, \ldots, b_l)\in \{0, 1\}^l}\prod_{j=1}^l\phi(b_j)\cdot \brac{\frac{w_{i_j}}{W}\cdot (b_j - 1/M)}^{n_j} + \epsilon (2M)^{k}
    \end{aligned}$$

    Define separately a sequence of $k$-wise independent random variables $\{\tilde{Z}_i\}_{1\leq i\leq N}$ with marginal probabilities: $$\tilde{Z}_i = \begin{cases}
        -\frac{w_i}{WM} &   \text{with probability }1-1/M\\
        \frac{w_i}{W}\cdot(1-1/M)   &   \text{with probability }1/M
    \end{cases}$$
    So, $\mathbb{E}[\tilde{Z_i}] = 0$, $|\tilde{Z_i}|\leq 1$, $\mathsf{Var}[\tilde{Z_i}]\leq 1$. Then, by the same derivation, we have:
    $$\begin{aligned}
        &\mathbb{E}\left[\brac{\sum_{i=1}^N\tilde{Z}_i}^k\right] = \sum_{(i_1, i_2, \ldots, i_l)\in [M]^l, i_j\neq i_k}\sum_{
        (n_1, n_2, \ldots, n_l), \sum_{j=1}^l n_j = k}\mathbb{E}\prod_{j=1}^l\tilde{Z}_{i_j}\\
        &= \sum_{(i_1, i_2, \ldots, i_l)\in [M]^l, i_j\neq i_k}\sum_{
        (n_1, n_2, \ldots, n_l), \sum_{j=1}^l n_j = k}\sum_{(b_1, b_2, \ldots, b_l)\in \{0, 1\}^l}\prod_{j=1}^l\phi(b_j)\cdot \brac{\frac{w_{i_j}}{W}\cdot (b_j - 1/M)}^{n_j}
    \end{aligned}$$
    % \hongyi{$\brac Z_i$ is fully independent, so we don't use the condition $k$-wise independent in Lemma C.1?} \tianyi{I should have said $k$-wise independent}
    
    Therefore, we have:
    $$\begin{aligned}
        &\mathbb{E}\left[\brac{\sum_{i=1}^N Z_i}^k\right] = \sum_{(i_1, i_2, \ldots, i_l)\in [M]^l, i_j\neq i_k}\sum_{
        (n_1, n_2, \ldots, n_l), \sum_{j=1}^l n_j = k}\mathbb{E}\prod_{j=1}^l Z_{i_j}\\
        &\leq \mathbb{E}\left[\brac{\sum_{i=1}^N\tilde{Z}_i}^k\right] + \epsilon (2M)^k N^k\leq (kN)^{k/2} + \epsilon (2M)^kN^k
    \end{aligned}$$
    The last inequality is due to \Cref{lem:kwise-conc}. Plugging in \Cref{equ:markov}, we conclude that:
    $$\Pr\left[\left|\sum_{i=1}^N w_i\cdot Y_i - \frac{1}{M}\sum_{i=1}^N w_i\right| > \frac{\delta WN}{M}\right]\leq \brac{\frac{M\sqrt{k}}{\delta \sqrt{N}}}^k + \epsilon (2M^2/ \delta)^k$$
\end{proof}

\end{document}